\documentclass[10pt, showpacs, amsmath, amssymb, twocolumn, aps, prl, superscriptaddress, notitlepage, floatfix]{revtex4-2} 

\usepackage{graphicx}
\usepackage{subfigure}
\usepackage{bbm}
\usepackage{amsmath}
\usepackage{amssymb}
\usepackage{amsfonts}
\usepackage{amsthm}
\usepackage{mathrsfs}
\usepackage{bm}
\usepackage{url}
\usepackage[colorlinks=true, linkcolor=blue]{hyperref}
\usepackage[T1]{fontenc}

\usepackage{mathtools}

\usepackage[capitalise,nameinlink]{cleveref}

\usepackage{braket}
\usepackage{dcolumn}
\usepackage{color}
\usepackage{booktabs}
\usepackage{nicefrac} 
\usepackage{mathdots}
\usepackage[dvipsnames]{xcolor}

\usepackage{comment}
\usepackage[normalem]{ulem}
\newcommand\redsout{\bgroup\markoverwith{\textcolor{red}{\rule[0.5ex]{2pt}{0.4pt}}}\ULon}

\newcommand{\E}{\mathcal{E}}

\newcommand{\sgn}{\mathrm{sgn}}

\newtheorem{theorem}{Theorem}

\newtheorem{corollary}{Corollary}

\begin{document}

\title{Correlation Geometry of Quantum Sensor Networks: Local--Global Information Flow and Local Privacy}

\author{Gong-Chu Li}
\author{Lei Chen}
\author{Xu-Song Hong}
\affiliation{Laboratory of Quantum Information, University of Science and Technology of China, Hefei 230026, China.}
\affiliation{Anhui Province Key Laboratory of Quantum Network, Hefei, Anhui 230026, China.}
\affiliation{CAS Center For Excellence in Quantum Information and Quantum Physics, University of Science and Technology of China, Hefei, Anhui 230026, China.}
\affiliation{Hefei National Laboratory, Hefei 230088, China}
\author{Hua-Qing Xu}
\author{Yuancheng Liu}
\author{Si-Qi Zhang}
\author{Jia-Hao Zhao}
\affiliation{Laboratory of Quantum Information, University of Science and Technology of China, Hefei 230026, China.}
\affiliation{Anhui Province Key Laboratory of Quantum Network, Hefei, Anhui 230026, China.}
\affiliation{CAS Center For Excellence in Quantum Information and Quantum Physics, University of Science and Technology of China, Hefei, Anhui 230026, China.}

\author{Geng Chen}
\email{chengeng@ustc.edu.cn}
\affiliation{Laboratory of Quantum Information, University of Science and Technology of China, Hefei 230026, China.}
\affiliation{Anhui Province Key Laboratory of Quantum Network, Hefei, Anhui 230026, China.}
\affiliation{CAS Center For Excellence in Quantum Information and Quantum Physics, University of Science and Technology of China, Hefei, Anhui 230026, China.}
\affiliation{Hefei National Laboratory, Hefei 230088, China}

\author{Chuan-Feng Li}
\email{cfli@ustc.edu.cn}
\affiliation{Laboratory of Quantum Information, University of Science and Technology of China, Hefei 230026, China.}
\affiliation{Anhui Province Key Laboratory of Quantum Network, Hefei, Anhui 230026, China.}
\affiliation{CAS Center For Excellence in Quantum Information and Quantum Physics, University of Science and Technology of China, Hefei, Anhui 230026, China.}
\affiliation{Hefei National Laboratory, Hefei 230088, China}

\author{Guang-Can Guo}
\affiliation{Laboratory of Quantum Information, University of Science and Technology of China, Hefei 230026, China.}
\affiliation{Anhui Province Key Laboratory of Quantum Network, Hefei, Anhui 230026, China.}
\affiliation{CAS Center For Excellence in Quantum Information and Quantum Physics, University of Science and Technology of China, Hefei, Anhui 230026, China.}
\affiliation{Hefei National Laboratory, Hefei 230088, China}

\begin{abstract}
Quantum sensor networks (QSN) typically encode  $N$ unknown parameters while targeting a single linear combination, rendering the $N-1$ remaining parameters as nuisance directions. 
To rigorously quantify estimation precision under such nuisances,  
we use the effective quantum Fisher information (EQFI) and establish a ``barrel-effect'' bottleneck: the global EQFI cannot exceed the weakest weighted local sensing capacity. 
To elucidate the information allocation mechanism underlying this bottleneck, we derive an exact local--global phase map that delineates how the trade-off between local and global EQFI depends dynamically on quantum correlations, and accordingly we identify concrete conditions for saturating the bottleneck bound. 
Notably, this geometric map uncovers a counterintuitive ``overcorrelated'' regime where excessive correlations actively degrade both local and global performance.
Finally, we apply the phase map to intrinsic local privacy and identify the condition under which every local parameter is inaccessible while the desired global combination remains estimable.
Overall, our work provides a principled methodology for engineering optimal network states in quantum sensing architectures.
\end{abstract}

\maketitle

\textit{Introduction---}
Distributed quantum sensing (DQS) uses a spatially separated quantum sensor network (QSN) as illustrated in \cref{fig: demo}, each sensor encoding an unknown local parameter $\theta_i$, to estimate global properties of a field. These properties are typically weighted linear combinations of local parameters, $\theta_w=\mathbf{w}^T\boldsymbol{\theta}$~\cite{proctor2018multiparameter,eldredge2018optimal,ge2018distributed, gessner2018sensitivity, zhang2021distributed,xu2025informationally,pezze2025advances}. For independent probes, the estimation variance is bounded by the standard quantum limit (SQL), $\Delta^2\hat{\theta}_w\propto1/N$~\cite{giovannetti2004quantum,giovannetti2006quantum,degen2017quantum,pezze2018quantum,giovannetti2011advances}, whereas suitable network correlations can enable Heisenberg-limited (HL) scaling, $\Delta^2\hat{\theta}_w\propto1/N^2$~\cite{proctor2018multiparameter,eldredge2018optimal,ge2018distributed}. This capability underlies applications ranging from clock synchronization~\cite{komar2014quantum} and gravitational-wave detection~\cite{tse2019quantum} to spatially resolved field sensing~\cite{liu2021distributed,guo2020distributed}.

Previous studies have analyzed the achievable global estimation precision within a constrained one-dimensional parameter model -- namely, a setting in which only a single prescribed parameter direction varies, and the precision is fully characterized by the projected quantum Fisher information (QFI), $F_w$ \cite{proctor2018multiparameter,eldredge2018optimal,ge2018distributed,gessner2018sensitivity}. However, when all $N$ local parameters are unknown, estimating $\theta_w$ is no longer a standalone single-parameter estimation problem. Rather, it constitutes a function-estimation task embedded in an $N$-dimensional parameter space, wherein the remaining $N-1$ parameters act as nuisance parameters whose influence cannot, in general, be neglected~\cite{eldredge2018optimal,ge2018distributed,qian2021optimal,suzuki2020quantum}. 
A limited number of prior works have incorporated these nuisance directions into the attainable precision ~\cite{eldredge2018optimal,rubio2020quantum,suzuki2020quantum,tsang2020quantum}.
Nevertheless, a rigorous and systematic characterization of nuisance-aware QSNs in terms of the resulting information remains lacking.

\begin{figure}[t]
    \centering
    \includegraphics[width=\linewidth]{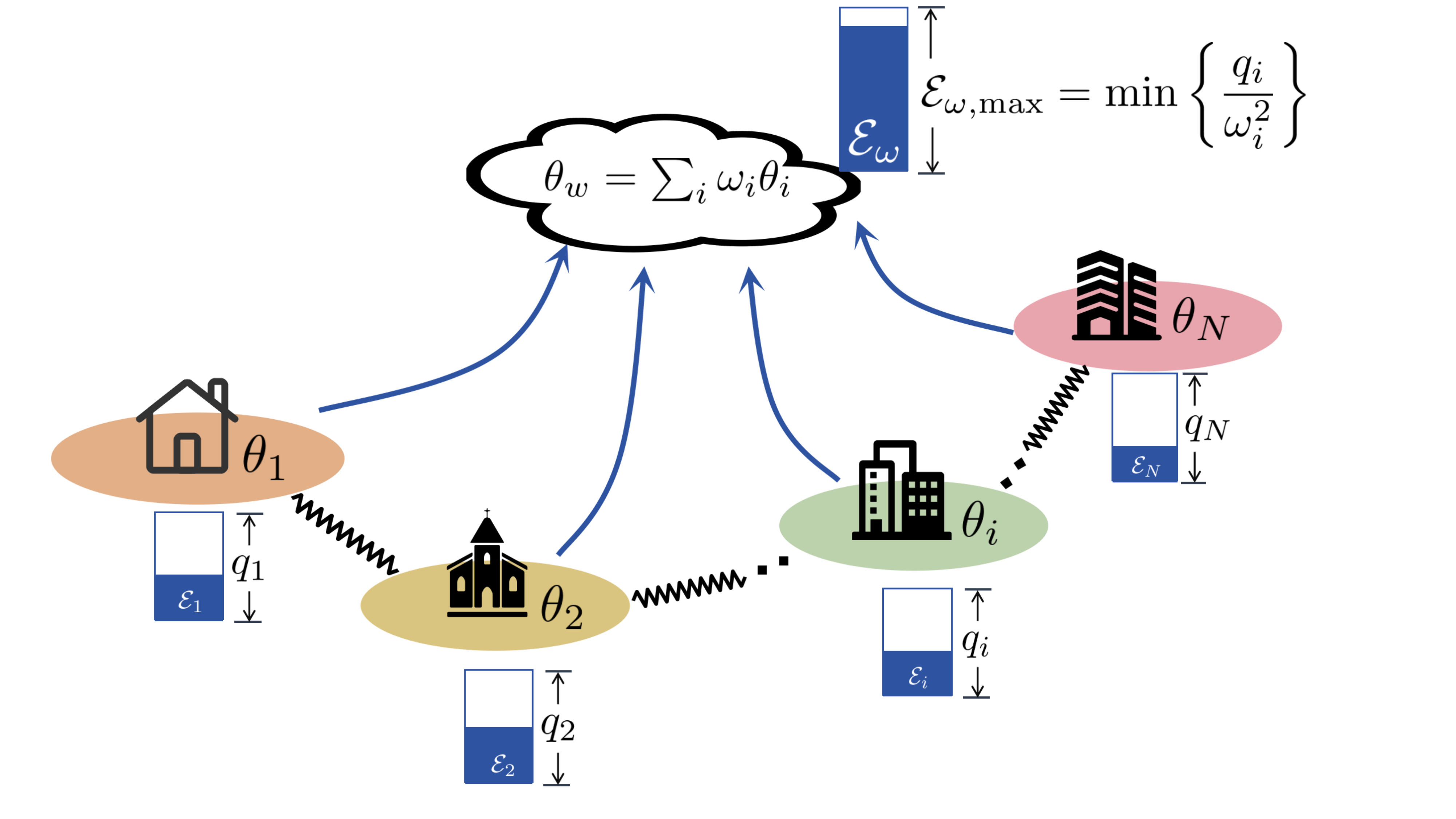}
    \caption{
    \textit{Local--global information distribution in a QSN.}
    Spatially separated sensor nodes ($\theta_1,\theta_2,\dots,\theta_N$) are connected through a multipartite quantum probe state. Finite local resources fix the diagonal sensing capacities $q_i\equiv Q_{ii}$, while the blue columns represent the operational local EQFIs $\E_i$. Network correlations redistribute metrological information between these local parameters and the global parameter $\theta_w=\sum_iw_i\theta_i$, estimated at the central node. %The global EQFI is bounded by the weakest weighted capacity, $\E_w\leq\min_{i:w_i\neq0}q_i/w_i^2$, producing a network ``barrel effect.''
    }
    \label{fig: demo}
    \vspace{-6mm}
\end{figure}

It is also well established that the ultimate achievable global estimation precision is fundamentally constrained by the physical limitations of individual sensor nodes~\cite{gessner2018sensitivity}-including, for instance, finite particle number~\cite{ge2018distributed} or limited probe energy~\cite{zhuang2018distributed} ---as well as by the correlation structure among the nodes. Thus, a central theoretical question underlying QSNs is: \textit{Given fixed local sensing capabilities, how can the attainable precision for estimating $\theta_w$ be rigorously characterized, and how should the inter-node correlation geometry be engineered to achieve this fundamental limit?}

In this paper, we formulate the nuisance-aware attainable precision in terms of the effective quantum Fisher information (EQFI), $\mathcal{E}_w$. We first establish a ``barrel-effect'' bottleneck: the global EQFI is bounded by the weakest weighted local sensing capacity. We then derive an exact local--global phase map that reveals how correlations redistribute local and global information beneath this bound.
In particular, the map distinguishes a beneficial trade-off from an \textit{overcorrelated regime}, in which correlations beyond the optimum reduce both local and global EQFIs. The phase map further yields a recursive construction of optimal correlations that saturate the bottleneck. A permutation-invariant example further reveals the sensitivity of the nuisance-aware EQFI to target-weight nonuniformity, which can drive an HL-to-SQL transition. Finally, we introduce \textit{intrinsic local privacy}, a more flexible alternative to functional privacy~\cite{shettell2022private,hassani2025privacy,bugalho2025private,ho2026quantum,farokhi2026precision}, and identify it with the phase-map boundary at which every local parameter is inaccessible while the global target remains estimable.

\textit{Effective Quantum Fisher Information---}
We consider an $N$-node QSN described by a multiparameter quantum Fisher information matrix (QFIM) $\mathbf{Q}$, with elements $Q_{ij}=\tfrac{1}{2}\operatorname{tr}[\rho\{L_i,L_j\}]$. Here, the symmetric logarithmic derivative (SLD) $L_i$ is defined by $\partial_{\theta_i}\rho=(\rho L_i+L_i\rho)/2$. The diagonal element $q_i\equiv Q_{ii}=\operatorname{tr}(\rho L_i^2)$ is the single-parameter QFI of node $i$, whereas the off-diagonal elements quantify information coupling between parameter directions. For any locally unbiased estimator $\bm{\hat{\theta}}$ based on $\nu$ independent measurements, the SLD quantum Cram\'er--Rao bound gives~\cite{helstrom1969quantum,holevo2011probabilistic,liu2020quantum}
\begin{equation}
    \operatorname{Cov}[\bm{\hat{\theta}}]
    \geq\frac{1}{\nu}\mathbf{Q}^{-1}.
\end{equation}
Although this matrix bound need not be simultaneously saturable for all parameters, we consider only one scalar target at a time and do not assume simultaneous optimal estimation. We take $\mathbf{Q}$ to be nonsingular in the following derivations; singular and limiting cases are discussed in Sec.~I of \cite{supp}.

To estimate a local parameter $\theta_i$ while treating all other parameters as unknown, we partition the QFIM as $\mathbf{Q}=\begin{pmatrix}q_i & \mathbf{c}_i^T\\ \mathbf{c}_i & \mathbf{Q}_{\not{i}}\end{pmatrix}$, where $\mathbf{c}_i$ describes the information coupling between node $i$ and the remaining subnetwork, and $\mathbf{Q}_{\not{i}}$ is the corresponding principal submatrix. The Schur complement then gives
\begin{equation}
    \Delta^2\hat{\theta}_i\geq\frac{1}{\nu\E_i},\qquad
    \E_i\equiv[(\mathbf{Q}^{-1})_{ii}]^{-1}=q_i-\mathbf{c}_i^T\mathbf{Q}_{\not{i}}^{-1}\mathbf{c}_i.
\end{equation}
We refer to $\E_i$ as the EQFI for node $i$~\cite{suzuki2020quantum,tsang2020quantum}. The subtracted term is the information penalty induced by the unknown network parameters, implying $\E_i\leq q_i$.

For the weighted global target $\theta_w=\mathbf{w}^T\bm{\theta}$, the variance of $\hat{\theta}_w=\mathbf{w}^T\bm{\hat{\theta}}$ similarly satisfies~\cite{ge2018distributed,proctor2018multiparameter,eldredge2018optimal}
\begin{equation}
    \Delta^2\hat{\theta}_w \geq\frac{1}{\nu\E_w}, \qquad
    \E_w \equiv \bigl[\mathbf{w}^T\mathbf{Q}^{-1}\mathbf{w}\bigr]^{-1}.
\end{equation}
Thus, $\E_i$ and $\E_w$ quantify the nuisance-aware information available for the local and global targets, respectively. We next determine the maximum global EQFI compatible with fixed diagonal sensing capacities $\{q_i\}$.

\textit{Resource-Constrained Global Precision---}
Finite local resources, such as particle number or probe energy~\cite{zhuang2018distributed,gessner2018sensitivity}, constrain the diagonal QFIs $q_i\equiv Q_{ii}$ of the individual nodes. 
Thus, the fundamental question posed earlier reduces to: \textit{How can the global EQFI be maximized, subject to fixed local sensing capacities $\{q_i\}$?}

To answer this question, we establish the following network-level precision ceiling.

\begin{theorem}[Fundamental Precision Bottleneck]\label{thm: maximum}
%\vspace{-2mm}
For an $N$-node QSN with fixed local capacities $\{q_i\}$ and a nonzero target weight vector $\mathbf{w}$, the global EQFI satisfies
\begin{equation}
    \E_w\leq \kappa,\qquad \kappa\equiv\min_{i:w_i\ne 0}\kappa_i,
    \qquad \kappa_i\equiv\frac{q_i}{w_i^2},
\end{equation}
where $\kappa_i$ is the intrinsic weighted capacity of node $i$ and is taken to be infinite when $w_i=0$. At the QFIM level, this bound is tight.
\vspace{-2mm}
\end{theorem}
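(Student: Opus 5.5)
We bound $\mathbf{w}^T\mathbf{Q}^{-1}\mathbf{w}$ from below using the variational (Cauchy--Schwarz) characterization of the inverse quadratic form. For positive definite $\mathbf{Q}$ and any vector $\mathbf{u}$ with $\mathbf{u}^T\mathbf{w}\neq 0$,
\begin{equation}
    \mathbf{w}^T\mathbf{Q}^{-1}\mathbf{w}\;\ge\;\frac{(\mathbf{u}^T\mathbf{w})^2}{\mathbf{u}^T\mathbf{Q}\mathbf{u}},
\end{equation}
which follows from Cauchy--Schwarz applied to $\mathbf{Q}^{1/2}\mathbf{u}$ and $\mathbf{Q}^{-1/2}\mathbf{w}$. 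Equality holds when $\mathbf{u}\propto\mathbf{Q}^{-1}\mathbf{w}$. Equivalently, $\E_w=\min_{\mathbf{u}:\,\mathbf{u}^T\mathbf{w}=1}\mathbf{u}^T\mathbf{Q}\mathbf{u}$.

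Choose $\mathbf{u}=\mathbf{e}_i$ for any node $i$ with $w_i\neq0$. This gives $\mathbf{w}^T\mathbf{Q}^{-1}\mathbf{w}\ge w_i^2/q_i$, so $\E_w\le q_i/w_i^2=\kappa_i$. Minimizing over all $i$ with $w_i\ne0$ yields $\E_w\le\kappa$. Nodes with $w_i=0$ impose no constraint, which matches the convention $\kappa_i=\infty$.

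Physically, one may simply ignore all parameters except $\theta_i$: estimating $\theta_w$ is at least as hard as estimating $w_i\theta_i$ alone. Mathematically, restricting the minimization over $\mathbf{u}$ to the single coordinate direction $\mathbf{e}_i$ can only increase the value $\mathbf{u}^T\mathbf{Q}\mathbf{u}$, hence can only overestimate $\E_w$.

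For tightness at the QFIM level, we must exhibit a positive (semi)definite $\mathbf{Q}$ with the prescribed diagonal $\{q_i\}$ and $\E_w=\kappa$. By the equality condition, we need $\mathbf{Q}^{-1}\mathbf{w}\propto\mathbf{e}_{i^*}$, where $i^*$ attains the minimum. Equivalently, $\mathbf{Q}\mathbf{e}_{i^*}\propto\mathbf{w}$, that is, the $i^*$-th column of $\mathbf{Q}$ is parallel to $\mathbf{w}$. This fixes $Q_{j i^*}=w_j q_{i^*}/w_{i^*}=w_j w_{i^*}\kappa$.

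Consider $\mathbf{Q}=\kappa\,\mathbf{w}\mathbf{w}^T+\mathbf{D}$ with $\mathbf{D}=\mathrm{diag}(q_j-\kappa w_j^2)$. Each entry satisfies $q_j-\kappa w_j^2\ge0$ by the definition of $\kappa$, and the $i^*$ entry vanishes. Hence $\mathbf{Q}\succeq0$, its diagonal is $\{q_j\}$, and its $i^*$ column is $\kappa w_{i^*}\mathbf{w}$, as required.

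The main obstacle is that this $\mathbf{Q}$ is singular, because $\mathbf{D}$ has a zero entry. The paper assumes nonsingular $\mathbf{Q}$ and defers singular cases to the supplement. I would handle this in one of two ways.

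\emph{Limiting sequence.} Take $\mathbf{D}_\epsilon=\mathbf{D}+\epsilon\,\mathbf{e}_{i^*}\mathbf{e}_{i^*}^T$ and shrink the rank-one coefficient to $(\kappa-\epsilon/w_{i^*}^2)$ so the diagonal entry $q_{i^*}$ is preserved. Then compute $\E_w$ via Sherman--Morrison, which gives $\E_w\to\kappa$ as $\epsilon\to0$. This shows the bound is a supremum.

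\emph{Generalized inverse.} Alternatively, use the generalized-inverse definition of EQFI, under which $\E_w=\kappa$ exactly: $\mathbf{w}$ lies in the range of $\mathbf{Q}$ and $\mathbf{w}^T\mathbf{Q}^{+}\mathbf{w}=1/\kappa$.

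Checking that the Sherman--Morrison limit is continuous and uniform is the step I expect to require care. Exhibiting a physical state realizing this $\mathbf{Q}$, for example a GHZ-like state on the weighted subnetwork, is not required for ``tight at the QFIM level.''
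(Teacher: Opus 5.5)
Your argument is the paper's own. The bound is the same Cauchy--Schwarz step with $\mathbf{u}=\mathbf{e}_i$; your variational form $\E_w=\min_{\mathbf{u}^T\mathbf{w}=1}\mathbf{u}^T\mathbf{Q}\mathbf{u}$ repackages it, and with $\mathbf{Q}^{+}$ in place of $\mathbf{Q}^{-1}$ it also covers singular $\mathbf{Q}$, as in the supplement. Your tightness witness is exactly the paper's $\mathbf{Q}_\star=\kappa\mathbf{w}\mathbf{w}^T+\operatorname{diag}(q_j-\kappa w_j^2)$. Two points need correcting, however.

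First, the ``main obstacle'' is misdiagnosed: a zero entry in $\mathbf{D}$ does not make $\mathbf{Q}_\star$ singular. If $\mathbf{x}^T\mathbf{Q}_\star\mathbf{x}=\kappa(\mathbf{w}^T\mathbf{x})^2+\sum_j D_{jj}x_j^2=0$, then $x_j=0$ wherever $D_{jj}>0$, and also $\mathbf{w}^T\mathbf{x}=0$. When $i^*$ is the unique minimizer (and zero-weight nodes have $q_j>0$), this leaves $w_{i^*}x_{i^*}=0$, so $\mathbf{Q}_\star>0$. Your own column identity $\mathbf{Q}_\star\mathbf{e}_{i^*}=\kappa w_{i^*}\mathbf{w}$ then gives $\mathbf{Q}_\star^{-1}\mathbf{w}=\mathbf{e}_{i^*}/(\kappa w_{i^*})$, hence $\E_w=\kappa$ exactly, with no limit needed. $\mathbf{Q}_\star$ is singular only for a degenerate bottleneck, i.e.\ two or more nodes attaining $\kappa$. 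There the same identity shows $\mathbf{w}\in\operatorname{range}(\mathbf{Q}_\star)$ and $\mathbf{w}^T\mathbf{Q}_\star^{+}\mathbf{w}=1/\kappa$. This is precisely the paper's supplementary treatment, so your generalized-inverse option is the right one, and it suffices on its own.

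Second, your limiting sequence leaves the constraint set you are optimizing over. Lowering the rank-one coefficient to $\kappa-\epsilon/w_{i^*}^2$ while adding $\epsilon$ only at $i^*$ changes every other participating diagonal entry to $q_j-\epsilon w_j^2/w_{i^*}^2$. A valid nonsingular family is $\kappa'\mathbf{w}\mathbf{w}^T+\operatorname{diag}(q_j-\kappa' w_j^2)$ with $\kappa'\uparrow\kappa$. For it, Sherman--Morrison gives $\E_w=\kappa'+\bigl[\sum_{j:w_j\neq0} w_j^2/(q_j-\kappa' w_j^2)\bigr]^{-1}\to\kappa$, since the $i^*$ term diverges. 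No continuity or uniformity issue arises.
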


\begin{proof}
For a nonsingular QFIM $\mathbf{Q}$, the Cauchy--Schwarz inequality gives $w_i^2=(\mathbf{e}_i^T\mathbf{w})^2 =\bigl(\mathbf{e}_i^T\mathbf{Q}^{1/2}\mathbf{Q}^{-1/2}\mathbf{w}\bigr)^2 \leq(\mathbf{e}_i^T\mathbf{Q}\mathbf{e}_i) (\mathbf{w}^T\mathbf{Q}^{-1}\mathbf{w})=q_i/\E_w$.
Thus, $\E_w\leq q_i/w_i^2$ for every $i$ with $w_i\neq0$, proving the bound. The nodewise inequality underlying this bound was previously derived in Ref.~\cite{eldredge2018optimal}. Tightness is witnessed by a QFIM $\mathbf{Q}_\star=\kappa\mathbf{w}\mathbf{w}^T+\operatorname{diag}(q_i-\kappa w_i^2)$, which has the prescribed diagonals and satisfies $\E_w=\kappa$. The recursive construction below explains how this optimal form emerges from the phase-map matching condition. More details in Sec.~II of \cite{supp}.
\vspace{-2mm}
\end{proof}

\begin{figure}[t]
\centering
\includegraphics[width=\linewidth]{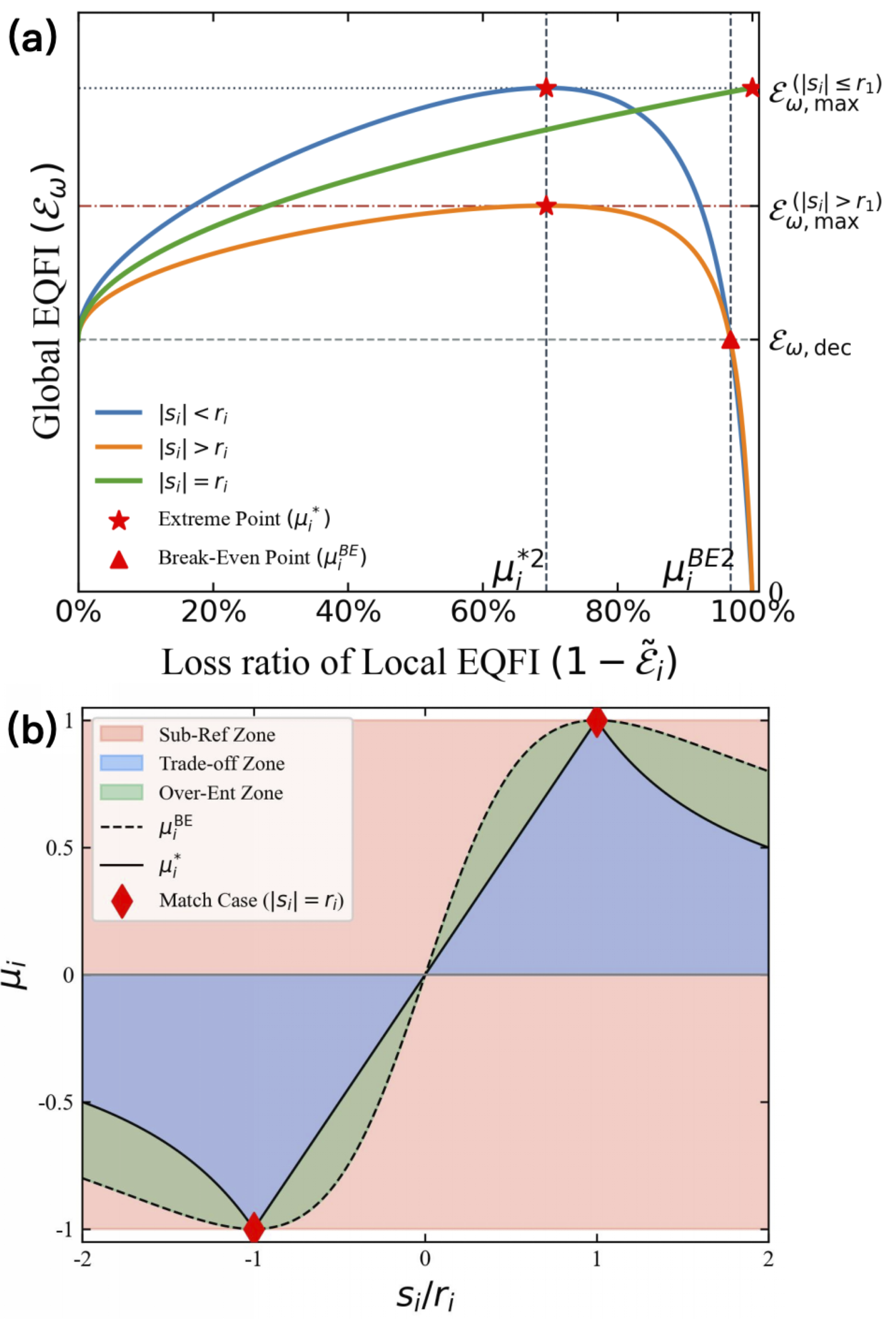}
\caption{
\textit{Local--global phase map.}
(a) Global EQFI versus the fractional local-EQFI loss, $1-\widetilde{\E}_i=\mu_i^2$, along the aligned branch $\sgn(\mu_i)=\sgn(s_i)$. Increasing the correlation enhances the global EQFI up to the optimum $\mu_i^*$ (trade-off zone). Beyond this optimum, both the local and global EQFIs decrease (overcorrelated zone). Beyond the break-even point $\mu_i^{\mathrm{BE}}$, the global EQFI falls below its decoupled value (sub-reference zone). In the matched case $|s_i|=r_i$, the overcorrelated zone vanishes, and the optimum is approached at the singular endpoint.
(b) Trade-off, overcorrelated, and sub-reference zones in the $(\mu_i,s_i/r_i)$ plane. Red diamonds mark the matched singular points, $|s_i|=r_i$ and $\mu_i=\sgn(s_i)$.
}
\label{fig:phasemap}
\end{figure}

\cref{thm: maximum} admits a straightforward ``barrel-effect'' interpretation: regardless of the network correlation structure, the global EQFI cannot exceed the smallest weighted capacity.

The theorem also determines when Heisenberg scaling is compatible with the local resource constraints. For a family of networks with $q_i=\Theta(1)$ and normalized weights $|w_i|=\Theta(1/N)$ for all participating nodes, saturation gives $\E_{w,\max}=\kappa=\Theta(N^2)$ and hence HL variance scaling, $\Delta^2\hat{\theta}_w=\Theta(N^{-2})$.

\textit{Local--Global Phase Map---}
To reveal the correlation mechanism underlying this bottleneck, we derive a phase map that determines how node--subnetwork correlations redistribute information between the local parameter and the global target.

\begin{theorem}[Node--Subnetwork Phase Map]
\label{thm:phase_map}
Consider a node $i$ with $q_i>0$, $\mathbf{Q}_{\not{i}}>0$, and $\mathbf{w}_{\not{i}}\neq0$. Write $\mathbf{c}_i=c_i\mathbf{v}_i$, where $c_i\in\mathbb{R}$ and $\|\mathbf{v}_i\|=1$, choosing the orientation of $\mathbf{v}_i$ such that $D_{\not{i}}\equiv\mathbf{w}_{\not{i}}^T\mathbf{Q}_{\not{i}}^{-1}\mathbf{v}_i\geq0$. Let $A_{\not{i}}=\mathbf{w}_{\not{i}}^T\mathbf{Q}_{\not{i}}^{-1}\mathbf{w}_{\not{i}}$ and $B_{\not{i}}=\mathbf{v}_i^T\mathbf{Q}_{\not{i}}^{-1}\mathbf{v}_i$, and define three dimensionless parameters:
\begin{equation}
    \mu_i=c_i\sqrt{\frac{B_{\not{i}}}{q_i}},~
    r_i=\frac{D_{\not{i}}}{\sqrt{A_{\not{i}}B_{\not{i}}}},~
    s_i=\frac{w_i}{\sqrt{q_iA_{\not{i}}}}.
\end{equation}
Here, $\mu_i$ is the signed normalized correlation strength and $r_i$ is the geometric overlap between the correlation and target directions. Positivity of $\mathbf{Q}$ implies $|\mu_i|\leq1$ and $0\leq r_i\leq1$. For $|\mu_i|<1$, the local and global EQFIs are
\begin{equation}\label{eq: phase-map}
    \E_i=q_i(1-\mu_i^2),\qquad
    \E_w=\frac{1}{A_{\not{i}}}\left[1+\frac{(s_i-r_i\mu_i)^2}{1-\mu_i^2}\right]^{-1}.
\end{equation}
The singular boundaries $|\mu_i|=1$ are defined by continuity.
\end{theorem}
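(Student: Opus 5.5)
The plan is to apply the block (Schur-complement) inversion of $\mathbf{Q}$ with respect to the partition $\{i\}\cup\not{i}$, and then rewrite every quantity in the dimensionless variables $\mu_i,r_i,s_i$.

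\textbf{Local EQFI.} This step is immediate from the Schur-complement formula already stated:
\[
\E_i=q_i-c_i^2\,\mathbf{v}_i^T\mathbf{Q}_{\not{i}}^{-1}\mathbf{v}_i=q_i-c_i^2B_{\not{i}}=q_i(1-\mu_i^2).
\]

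\textbf{Constraints on $\mu_i$ and $r_i$.} Positivity of $\mathbf{Q}$ forces $\E_i\ge 0$, and $\E_i>0$ when $\mathbf{Q}>0$, so $|\mu_i|\le 1$. For $r_i$, apply Cauchy--Schwarz in the inner product defined by $\mathbf{Q}_{\not{i}}^{-1}$:
\[
D_{\not{i}}^2\le A_{\not{i}}B_{\not{i}}.
\]
Together with the orientation convention $D_{\not{i}}\ge0$, this gives $0\le r_i\le1$. We also need $A_{\not{i}}>0$ and $B_{\not{i}}>0$, which hold because $\mathbf{Q}_{\not{i}}>0$, $\mathbf{w}_{\not{i}}\ne0$ and $\|\mathbf{v}_i\|=1$.

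\textbf{Global EQFI.} Write the block inverse with $S=\E_i$:
\[
(\mathbf{Q}^{-1})_{ii}=1/S,\qquad
(\mathbf{Q}^{-1})_{\not{i}i}=-\mathbf{Q}_{\not{i}}^{-1}\mathbf{c}_i/S,\qquad
(\mathbf{Q}^{-1})_{\not{i}\not{i}}=\mathbf{Q}_{\not{i}}^{-1}+\mathbf{Q}_{\not{i}}^{-1}\mathbf{c}_i\mathbf{c}_i^T\mathbf{Q}_{\not{i}}^{-1}/S.
\]
Expanding the quadratic form then gives
\[
\mathbf{w}^T\mathbf{Q}^{-1}\mathbf{w}=A_{\not{i}}+\frac{\bigl(w_i-c_iD_{\not{i}}\bigr)^2}{S},
\]
since the cross terms $-2w_ic_iD_{\not{i}}/S$, the term $c_i^2D_{\not{i}}^2/S$ and the term $w_i^2/S$ combine into a perfect square. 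Factoring out $A_{\not{i}}$ and substituting $S=q_i(1-\mu_i^2)$ leaves
\[
\frac{(w_i-c_iD_{\not{i}})^2}{q_iA_{\not{i}}(1-\mu_i^2)}.
\]
The two pieces of the numerator normalize as
\[
\frac{w_i}{\sqrt{q_iA_{\not{i}}}}=s_i,
\qquad
\frac{c_iD_{\not{i}}}{\sqrt{q_iA_{\not{i}}}}
=c_i\sqrt{\frac{B_{\not{i}}}{q_i}}\cdot\frac{D_{\not{i}}}{\sqrt{A_{\not{i}}B_{\not{i}}}}
=\mu_ir_i.
\]
Inverting yields \cref{eq: phase-map}.

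\textbf{Singular boundary.} The only delicate point is the boundary $|\mu_i|=1$, where $S=0$ and $\mathbf{Q}$ is singular. There one checks that the formula extends continuously. If $s_i\ne r_i\mu_i$, then $\E_w\to0$. If $s_i=r_i\mu_i$ (requiring $|s_i|=r_i$), the ratio $0/0$ has to be treated as a limit along the approach path. The statement simply defines these values by continuity, so no further proof is needed beyond noting the limits. The main obstacle is only bookkeeping: completing the square correctly and tracking the sign of $c_i$ relative to the orientation of $\mathbf{v}_i$.
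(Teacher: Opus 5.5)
Your proposal is correct and follows the paper's derivation essentially step for step. The Schur complement gives $\E_i=q_i(1-\mu_i^2)$, the Gram-matrix (Cauchy--Schwarz) bound $D_{\not{i}}^2\le A_{\not{i}}B_{\not{i}}$ gives $0\le r_i\le1$, and block inversion gives $\E_w^{-1}=A_{\not{i}}+(w_i-c_iD_{\not{i}})^2/\E_i$, which normalizes to the stated map. The only extra in the paper is an explicit pseudoinverse check that the continuity-defined matched endpoint agrees with $(\mathbf{w}^T\mathbf{Q}^+\mathbf{w})^{-1}=A_{\not{i}}^{-1}$. Along fixed $s_i,r_i$ this limit is unambiguous, not path-dependent as you suggest, since $(s_i-r_i\mu_i)^2/(1-\mu_i^2)=r_i^2(1-|\mu_i|)/(1+|\mu_i|)\to0$; the statement does not require this check.
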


From \cref{eq: phase-map}, the fractional loss of locally accessible information is
\begin{equation}
    1-\widetilde{\E}_i=\mu_i^2, \quad \widetilde{\E}_i\equiv{\E_i}/{q_i}. 
\end{equation}
At the decoupled point $\mu_i=0$, the global EQFI is $\E_{w,\mathrm{dec}}=\frac{1}{A_{\not{i}}(1+s_i^2)}$. If $\sgn(\mu_i)=-\sgn(s_i)$, correlations increase the mismatch term in \cref{eq: phase-map}, giving $\E_w<\E_{w,\mathrm{dec}}$. We therefore focus on the aligned branch $\sgn(\mu_i)=\sgn(s_i)$, as shown in \cref{fig:phasemap}~(a).

For nonzero $r_i$ and $s_i$, the global EQFI increases with $|\mu_i|$ until
$\mu_i^*=\sgn(s_i)\min\{|s_i|/r_i,r_i/|s_i|\}$. Before $\mu_i^*$, the global EQFI increases while the local EQFI decreases, defining the local--global trade-off zone. Beyond $\mu_i^*$, both quantities decrease, producing an overcorrelated regime in which additional correlation becomes a metrological penalty rather than a resource. The global EQFI returns to its decoupled value at $\mu_i^{\mathrm{BE}}=2r_is_i/(r_i^2+s_i^2)$.

Along the aligned branch, $|\mu_i|>|\mu_i^{\mathrm{BE}}|$ implies $\E_w<\E_{w,\mathrm{dec}}$, defining the sub-reference region. If $|s_i|\neq r_i$, the limit $\mu_i\rightarrow\operatorname{sgn}(s_i)$ drives both $\E_i$ and $\E_w$ to zero. In the matched case $|s_i|=r_i$, however, $\mu_i^*=\operatorname{sgn}(s_i)$ lies at the singular boundary: $\E_i\rightarrow0$ while $\E_w\rightarrow A_{\not{i}}^{-1}$. The overcorrelated zone therefore vanishes. 
However, the matching condition must be strictly satisfied; even slight nonuniformity in the weight vector $\mathbf{w}$ leads to a substantial degradation in the EQFI, which will be illustrated in the \textit{Example} section. 
The cases $r_i=0$ or $s_i=0$ are given in Sec.~III of \cite{supp}.

These regimes are summarized in the $(\mu_i,s_i/r_i)$ phase plane in \cref{fig:phasemap}(b). Maximizing \cref{eq: phase-map} over the admissible correlation strength yields the following node--subnetwork limit and its equality condition.

\begin{corollary}[Node--Subnetwork Limit]\label{col: maximum}\vspace{-2mm}
For fixed $\mathbf{Q}_{\not{i}}$ and correlation direction $\mathbf{v}_i$, the maximum global EQFI is
\begin{equation}\label{eq: subnetwork-limit}
    \E_{w,\max}=\begin{cases}
    A_{\not{i}}^{-1}, & |s_i|\leq r_i,\\[1mm]
    A_{\not{i}}^{-1}(1+s_i^2-r_i^2)^{-1},& |s_i|>r_i.
    \end{cases}
\end{equation}
The optimum occurs at $\mu_i=\mu_i^*$, with the matched case reached as a singular limit.
\vspace{-2mm}
\end{corollary}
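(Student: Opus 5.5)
The plan is to take the closed form \cref{eq: phase-map} of \cref{thm:phase_map} as given and reduce the corollary to a one-variable optimization. With $\mathbf{Q}_{\not{i}}$ and $\mathbf{v}_i$ fixed, the quantities $A_{\not{i}}$, $r_i$ and $s_i$ do not change; only $\mu_i\in(-1,1)$ varies, together with its continuous extension to $|\mu_i|=1$. Since $\E_w=A_{\not{i}}^{-1}[1+g(\mu_i)]^{-1}$ with
\begin{equation*}
g(\mu)\equiv\frac{(s_i-r_i\mu)^2}{1-\mu^2}\geq0,
\end{equation*}
maximizing $\E_w$ is the same as minimizing $g$ over $(-1,1)$ and then handling the boundary.

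In the first step I compute the stationary points. Differentiating,
\begin{equation*}
g'(\mu)\propto(s_i-r_i\mu)\bigl[-r_i(1-\mu^2)+\mu(s_i-r_i\mu)\bigr]=(s_i-r_i\mu)(s_i\mu-r_i).
\end{equation*}
The critical points are therefore $\mu=s_i/r_i$, where $g=0$, and $\mu=r_i/s_i$. Substituting the second gives
\begin{equation*}
g(r_i/s_i)=\frac{(s_i^2-r_i^2)^2/s_i^2}{(s_i^2-r_i^2)/s_i^2}=s_i^2-r_i^2.
\end{equation*}
Only one critical point lies in $(-1,1)$, unless $|s_i|=r_i$, in which case both sit on the boundary. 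This point is $\mu_i^*=\sgn(s_i)\min\{|s_i|/r_i,r_i/|s_i|\}$, matching the expression stated after \cref{thm:phase_map}. Near $\mu=\pm1$ the function $g$ blows up unless its numerator vanishes there, so the interior critical point is the global minimum.

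The second step splits into cases. If $|s_i|<r_i$, the point $s_i/r_i$ is interior and $g=0$ there, so $\E_{w,\max}=A_{\not{i}}^{-1}$. This is also the a priori ceiling, because $g\geq0$. If $|s_i|>r_i$, the interior minimizer is $r_i/s_i$, which gives $\E_{w,\max}=A_{\not{i}}^{-1}(1+s_i^2-r_i^2)^{-1}$.

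The main care is needed at the degenerate and boundary cases. For $|s_i|=r_i>0$, $g$ is the ratio $r_i^2(1\mp\mu)^2/(1-\mu^2)=r_i^2(1\mp\mu)/(1\pm\mu)$, and this tends to $0$ as $\mu\to\sgn(s_i)$. The supremum $A_{\not{i}}^{-1}$ is therefore reached only in the singular limit, and invoking the continuity definition in \cref{thm:phase_map} gives it as the boundary value. For $r_i=0$, $g=s_i^2/(1-\mu^2)$ is minimized at $\mu=0$ with value $s_i^2$; this agrees with the second branch, or with the first branch when $s_i=0$. For $s_i=0$ with $r_i>0$, the minimizer is $\mu=0$ with $g=0$. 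Both of these cases fit the formula, and in each I would confirm that the value is the one defined for that case in Sec.~III of \cite{supp}. Finally, I would note that $|\mu_i|\leq1$ is the only constraint positivity imposes on $\mu_i$ once $\mathbf{Q}_{\not{i}}$ and $\mathbf{v}_i$ are fixed. This follows because the Schur complement $q_i(1-\mu_i^2)$ must be nonnegative, and it makes every $\mu_i$ in the range admissible.
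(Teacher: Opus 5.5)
Your proposal is correct and follows essentially the same route as the paper's supplementary derivation: reduce to optimizing the mismatch term over $\mu_i$, locate the stationary points $s_i/r_i$ and $r_i/s_i$ (product one, so exactly one is admissible when $|s_i|\neq r_i$), evaluate the mismatch term there as $0$ or $s_i^2-r_i^2$, and treat $|s_i|=r_i$, $s_i=0$, $r_i=0$ separately. The differences are minor. You cover both signs of $\mu_i$ at once through the unique-interior-critical-point argument, where the paper first reduces by sign symmetry and dismisses the misaligned branch. At $|\mu_i|=1$ you rely on the continuity definition, whereas the paper also checks via the pseudoinverse that $\E_w=A_{\not{i}}^{-1}$ there.
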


This result also provides a geometric proof of \cref{thm: maximum}. For any participating node, $A_{\not{i}}^{-1}=s_i^2\kappa_i$. If $|s_i|\leq r_i\leq1$, then $\E_{w,\max}^{(i)}=s_i^2\kappa_i\leq\kappa_i$. If $|s_i|>r_i$, then $\E_{w,\max}^{(i)}\leq(A_{\not{i}}s_i^2)^{-1}=\kappa_i$. Applying this inequality to every node gives $\E_w\leq\min_i\kappa_i$. When $|s_i|\leq r_i$, the matching condition $\mu_i=s_i/r_i$ cancels the node-dependent mismatch term in \cref{eq: phase-map}, leaving the global precision limited solely by the remaining subnetwork. 

\textit{Optimal QFIM Design with Fixed Local Capacities---}
\cref{col: maximum} reduces the global optimization to a sequence of local matching conditions, which we impose recursively to construct QFIMs that attain the bottleneck bound in \cref{thm: maximum}.

For fixed local capacities and target weights, the bottleneck-saturating QFIM is generally nonunique. For convenience, we choose the maximally aligned case, $r_k=1$, at each recursive step.

We sort the participating nodes in decreasing order of their weighted capacities, $\kappa_1\geq\kappa_2\geq\cdots\geq\kappa_N$, and define $\mathbf{Q}^{(k)}$ and $\mathbf{w}^{(k)}=(w_k,\ldots,w_N)^T$, respectively, as the QFIM and weight vector of the subnetwork containing nodes $k,\ldots,N$. 
We then implement the node--subnetwork matching condition recursively, adding one node at a time while preserving the bottleneck-limited global EQFI.

We initialize the construction with $\mathbf{Q}^{(N)}=q_N$, for which $\E_w^{(N)}=q_N/w_N^2=\kappa_N$. Suppose that $\mathbf{Q}^{(k+1)}$ has been constructed with $\E_w^{(k+1)}=\kappa_N$. We add node $k$ through
\begin{equation}
    \mathbf{Q}^{(k)} = \begin{pmatrix}
        q_k & \mathbf{c}^{(k)T}\\
        \mathbf{c}^{(k)} & \mathbf{Q}^{(k+1)}
    \end{pmatrix}.
\end{equation}
The maximally aligned direction is $\mathbf{v}^{(k+1)}={\mathbf{w}^{(k+1)}}/{\|\mathbf{w}^{(k+1)}\|}$, 
while the signed coupling is
\begin{equation}\label{eq: recursive_correlation_main}
    \mathbf{c}^{(k)}=\kappa_Nw_k\mathbf{w}^{(k+1)}.
\end{equation}
Because $\E_w^{(k+1)}=\kappa_N$, the corresponding phase-map coordinates satisfy
$r_k=1$ and $\mu_k=s_k=w_k\sqrt{\kappa_N/q_k}
=\sgn(w_k)\sqrt{\kappa_N/\kappa_k}$.
Since $\kappa_k\geq\kappa_N$, one has $|\mu_k|\leq1$. The matching condition $\mu_k=s_k$ therefore places each added node at the subnetwork ceiling, preserving $\E_w^{(k)}=\E_w^{(k+1)}=\kappa_N$. 

Iterating from $k=N-1$ to $k=1$ yields the canonical QFIM
$\mathbf{Q}^{(1)}=\kappa_N\mathbf{w}\mathbf{w}^T+\operatorname{diag}\!\left(q_i-\kappa_Nw_i^2\right)$, which has $\mathbf{Q}^{(1)}_{ii}=q_i$ and saturates $\E_w=\kappa_N=\min_i\kappa_i$. 

More generally, each level permits any geometrically realizable $r_k\in[\sqrt{\kappa_N/\kappa_k},1]$, together with $\mu_k=s_k/r_k$, generating a nonunique family of optimal QFIMs; see Sec.~IV of \cite{supp}.

\begin{figure}[t]
\centering
\includegraphics[width=0.9\linewidth]{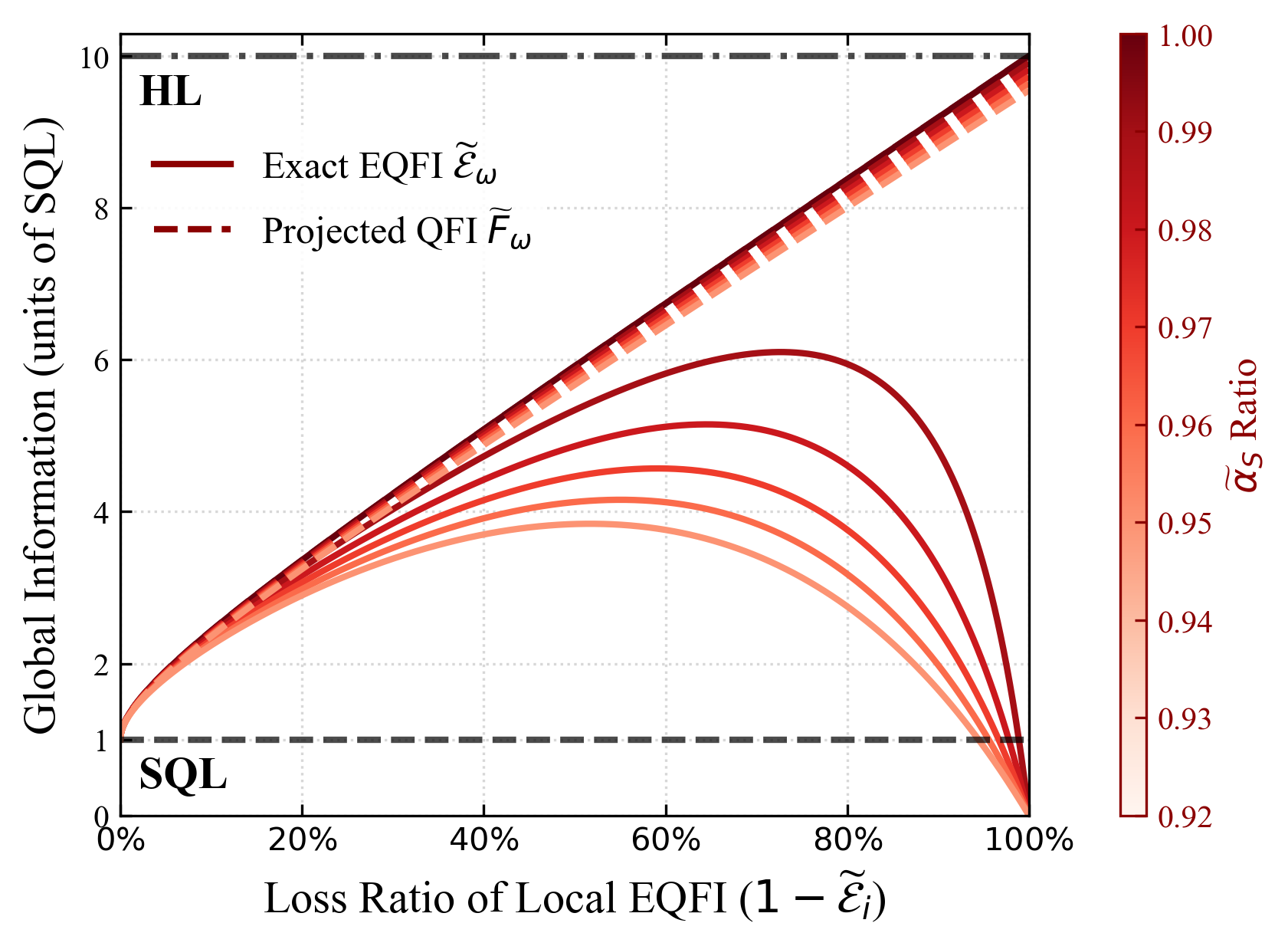}
\caption{
    \textit{EQFI and projected QFIs in a permutation-invariant network.}
    For $N=10$, the normalized EQFI $\widetilde{\E}_w$ (solid curves) and normalized projected QFI $\widetilde{F}_w$ (dashed curves) are plotted against the local-information loss $1-\widetilde{\E}_i$. Colors indicate the weight-uniformity parameter $\widetilde{\alpha}_S$, which accounts for the uniformity of $\mathbf{w}$.
    Only in the fully uniform and matched case ($\widetilde{\alpha}_S = 1$), $\widetilde{\mathcal{E}}_w$ and $\widetilde{F}_w$ coincide and exhibit monotonic increase.  
    For other unmatched scenarios with $\widetilde{\alpha}_S < 1$, $\widetilde{F}_w$ remains monotonic and exhibits only weak dependence on $\widetilde{\alpha}_S$, becoming invalid in nuisance-aware QSN. By contrast, $\widetilde{\mathcal{E}}_w$ attains a distinct maximum and declines rapidly as $\widetilde{\alpha}_S$ deviates from unity, thereby still serving as a faithful quantification.}
\vspace{-3mm}
\label{fig:permutation_invariant}
\end{figure}

\textit{Example: Permutation-Invariant Network---}
We illustrate the phase map using a permutation-invariant QFIM
\begin{equation}
    \mathbf{Q}=q\bigl[(1-\gamma)\mathbf{I}_N+\gamma\mathbf{J}_N\bigr],
    \quad (0\leq\gamma<1),
\end{equation}
where every node has diagonal QFI $q$ and every pair has correlation $q\gamma$. The singular limit $\gamma\rightarrow1$ gives the rank-one QFIM associated with an ideal GHZ-type probe.

For normalized nonnegative weights satisfying $\sum_iw_i=1$, define $S=\sum_iw_i^2$ and $\widetilde{\alpha}_S=\frac{1-S}{S(N-1)}$. 
Here, $\widetilde{\alpha}_S=0$ represents a fully localized target and $\widetilde{\alpha}_S=1$ represents a uniform global average, i.e., the matched case. The fractional local-information loss is $1-\widetilde{\E}_i=\frac{(N-1)\gamma^2}{1+(N-2)\gamma}$. 

Taking the uncorrelated value $\E_{w,\mathrm{SQL}}=q/S$ as the reference, we compare the nuisance-aware EQFI with the directional projected QFI $F_w=\mathbf{w}^T\mathbf{Q}\mathbf{w}/(\mathbf{w}^T\mathbf{w})^2$. Their normalized forms are
\begin{equation}\label{eq: invariant_enhancement}
    \begin{aligned}
        \widetilde{F}_w &\equiv \frac{F_w}{\E_{w,\mathrm{SQL}}} = 1+(N-1)\gamma\widetilde{\alpha}_S,\\
        \widetilde{\E}_w &\equiv \frac{\E_w}{\E_{w,\mathrm{SQL}}}= 1+(N-1)\gamma \frac{\widetilde{\alpha}_S-\gamma}{1+\gamma R},
    \end{aligned}
\end{equation}
where $R=(N-1)(1-\widetilde{\alpha}_S)-1$.

Only for a uniform target, i.e., $\widetilde{\alpha}_S=1$, the target direction coincides with the symmetric QFIM eigenmode and $F_w=\E_w$. For any nonuniform target, $\widetilde{\alpha}_S<1$, one instead finds $F_w>\E_w$ for $\gamma>0$. 
Thus, $F_w$, here, fails to faithfully reflect the attainable global accuracy.

As shown in \cref{fig:permutation_invariant}, for $0<\widetilde{\alpha}_S<1$, $\widetilde{F}_w$ increases monotonically with $\gamma$, whereas $\widetilde{\E}_w$ reaches a finite optimum before entering the overcorrelated regime. The nuisance-aware enhancement is therefore highly sensitive to nonuniformity in target weight $\mathbf{w}$. For example, at $N=10$, decreasing $\widetilde{\alpha}_S$ from $1$ to $0.99$ reduces the optimized EQFI by approximately $40\%$ relative to the uniform-target optimum. More generally, retaining HL scaling requires $1-\widetilde{\alpha}_S=O(1/N)$; see Sec.~V of \cite{supp}.

\textit{Local Privacy in QSN---}
The phase map further reveals that, at its matched-singular boundary, the global target may remain estimable even when every local parameter is unidentifiable. Motivated by this observation, we define \textit{intrinsic local privacy} by $\E_i=0$ for all $i$, while the target $\theta_w=\mathbf{w}^T\bm{\theta}$ retains $\E_w>0$. This is a QFIM-level non-identifiability condition rather than, by itself, a cryptographic security guarantee. 

\begin{theorem}[Intrinsic Local Privacy]
\label{thm:privacy_security}
Let $\mathbf{Q}\geq0$ and $\mathbf{w}\neq0$. Intrinsic local privacy holds if and only if
\begin{equation}\label{eq: local_privacy_range}
    \mathbf{e}_i\notin\operatorname{range}(\mathbf{Q}) \quad\forall i, \qquad \mathbf{w}\in\operatorname{range}(\mathbf{Q}).
\end{equation}
Within the node--subnetwork phase map, this corresponds to the
simultaneous matched-singular boundary
\begin{equation}\label{eq: privacy_phase_condition}
    |\mu_i|=1, \qquad s_i=\mu_i r_i, \qquad\forall i,
\end{equation}
understood through a nonsingular limiting sequence.
\end{theorem}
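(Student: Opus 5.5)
The plan is to first fix a definition of the EQFIs for a possibly singular $\mathbf{Q}\geq0$ that agrees with the paper's definitions when $\mathbf{Q}>0$, and then reduce both claims to linear algebra. I will use the variational form
\begin{equation*}
\mathbf{w}^T\mathbf{Q}^{-1}\mathbf{w}=\sup_{\mathbf{x}\neq0}\frac{(\mathbf{w}^T\mathbf{x})^2}{\mathbf{x}^T\mathbf{Q}\mathbf{x}},
\end{equation*}
which is valid for $\mathbf{Q}>0$ by the same Cauchy--Schwarz step used in the proof of \cref{thm: maximum}. Equivalently, $\E_w=\inf\{\mathbf{x}^T\mathbf{Q}\mathbf{x}:\mathbf{w}^T\mathbf{x}=1\}$. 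This infimum makes sense for every $\mathbf{Q}\geq0$, and I take it as the definition of $\E_w$. The local EQFI is the special case $\E_i=\E_{\mathbf{e}_i}$. I will also check that this definition coincides with the regularized limit $\lim_{\epsilon\to0^+}[\mathbf{w}^T(\mathbf{Q}+\epsilon\mathbf{I})^{-1}\mathbf{w}]^{-1}$. The limit exists because the quantity is monotone in $\epsilon$. To compare it with the infimum, I diagonalize $\mathbf{Q}$: the terms from eigenvalues $\lambda>0$ are bounded, while any kernel component of $\mathbf{w}$ contributes $\|P_{\ker}\mathbf{w}\|^2/\epsilon\to\infty$. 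This is consistent with the convention of Sec.~I of \cite{supp}.

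The range characterization then follows from one lemma. \emph{Lemma:} $\E_{\mathbf{u}}=0$ if and only if $\mathbf{u}\notin\operatorname{range}(\mathbf{Q})$. If $\mathbf{u}\notin\operatorname{range}(\mathbf{Q})=\ker(\mathbf{Q})^\perp$, there is $\mathbf{x}\in\ker\mathbf{Q}$ with $\mathbf{u}^T\mathbf{x}=1$, and so $\E_{\mathbf{u}}=0$. If instead $\mathbf{u}=\mathbf{Q}\mathbf{y}$, then Cauchy--Schwarz in the $\mathbf{Q}$-seminorm gives $1=(\mathbf{y}^T\mathbf{Q}\mathbf{x})^2\leq(\mathbf{y}^T\mathbf{Q}\mathbf{y})(\mathbf{x}^T\mathbf{Q}\mathbf{x})$ for every feasible $\mathbf{x}$. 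Hence $\E_{\mathbf{u}}\geq(\mathbf{y}^T\mathbf{Q}\mathbf{y})^{-1}>0$, where $\mathbf{y}^T\mathbf{Q}\mathbf{y}>0$ because $\mathbf{u}\neq0$. Applying the lemma to $\mathbf{u}=\mathbf{e}_i$ for all $i$ and to $\mathbf{u}=\mathbf{w}$ gives exactly \cref{eq: local_privacy_range}. As a sanity check, any such $\mathbf{Q}$ must be singular, since $\mathbf{e}_i\notin\operatorname{range}(\mathbf{Q})$. It must also have rank at most $N-1$ while still containing $\mathbf{w}$ in its range. The GHZ-type rank-one limit $\mathbf{Q}\propto\mathbf{w}\mathbf{w}^T$ with all $w_i\neq0$ is the canonical example.

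For the phase-map statement \cref{eq: privacy_phase_condition}, I take the regularization $\mathbf{Q}_\epsilon=\mathbf{Q}+\epsilon\mathbf{I}>0$. For this matrix the hypotheses of \cref{thm:phase_map} hold for every $\epsilon>0$, so each node has coordinates $(\mu_i^\epsilon,r_i^\epsilon,s_i^\epsilon)$ and \cref{eq: phase-map} applies. Since $\E_i^\epsilon=q_i^\epsilon(1-(\mu_i^\epsilon)^2)$ and $q_i^\epsilon\to q_i$, the condition $\E_i^\epsilon\to0$ forces $|\mu_i^\epsilon|\to1$ whenever $q_i>0$. Next, $\E_w^\epsilon\to\E_w>0$ requires the bracket in \cref{eq: phase-map} to stay bounded. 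This needs $A_{\not{i}}^\epsilon$ to stay bounded, which follows from $A_{\not{i}}^{-1}\ge\E_w$, itself a consequence of the bracket being $\geq1$. Boundedness of the bracket then gives $(s_i^\epsilon-r_i^\epsilon\mu_i^\epsilon)^2=O(1-(\mu_i^\epsilon)^2)\to0$, that is, $s_i=\mu_ir_i$ along the sequence. Conversely, assume the limiting coordinates satisfy \cref{eq: privacy_phase_condition} with the mismatch ratio bounded and $A_{\not{i}}$ bounded. Reading \cref{eq: phase-map} backward then gives $\E_i\to0$ and $\E_w\to$ a positive value. The matched-singular discussion after \cref{thm:phase_map} and \cref{col: maximum} supplies this. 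In particular, the matched endpoint has $\E_w\to A_{\not{i}}^{-1}$.

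I expect the main obstacle to be making this limiting correspondence precise rather than merely formal. The parameters $\mu_i,r_i,s_i$ are defined only for nonsingular $\mathbf{Q}_{\not{i}}$, and $A_{\not{i}},B_{\not{i}}$ may diverge as $\epsilon\to0$. Moreover, $s_i=\mu_ir_i$ in the limit is necessary but not obviously sufficient, because the ratio $(s_i-r_i\mu_i)^2/(1-\mu_i^2)$ is of the form $0/0$. I would resolve this by stating \cref{eq: privacy_phase_condition} as a condition on sequences: the mismatch ratio must remain bounded, which is equivalent to $\E_w>0$ through \cref{eq: phase-map}. Degenerate nodes with $q_i=0$ need separate treatment. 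For them $\E_i=0$ trivially, and I would absorb them by continuity, or else note that such a node has $\mathbf{e}_i\in\ker\mathbf{Q}$ and so satisfies the range condition automatically.
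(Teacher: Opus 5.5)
Your proposal is correct, and its second half follows the paper's route. The supplement also regularizes with $\mathbf Q+\epsilon\mathbf I$, reads $|\mu_i|\to1$ off $\mathcal E_i=q_i(1-\mu_i^2)$, and argues that a surviving $\mathcal E_w>0$ forces $s_i-\mu_ir_i\to0$.

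You differ on the range characterization. The paper simply invokes the singular-QFIM convention of Sec.~I of the supplement ($\mathcal E_{\mathbf a}=[\mathbf a^T\mathbf Q^+\mathbf a]^{-1}$ on $\operatorname{range}(\mathbf Q)$, $0$ otherwise) and applies it to $\mathbf e_i$ and $\mathbf w$. You instead define $\mathcal E_{\mathbf a}=\inf\{\mathbf x^T\mathbf Q\mathbf x:\mathbf a^T\mathbf x=1\}$, derive the criterion, and check it against the $\epsilon\to0$ limit. This justifies the convention rather than postulating it. Your Cauchy--Schwarz bound is attained at $\mathbf x=\mathbf y/(\mathbf y^T\mathbf Q\mathbf y)$ with $\mathbf y^T\mathbf Q\mathbf y=\mathbf w^T\mathbf Q^+\mathbf w$, so you recover the paper's value, not just positivity.

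Your treatment of the converse is sharper than the paper's, whose argument only runs in the necessity direction. Your extra requirement that $A_{\not{i}}^\epsilon$ stay bounded is genuinely needed. Take $\mathbf Q=\mathbf a\mathbf a^T+\mathbf b\mathbf b^T$ with $\mathbf a=(1,1,0,0)^T$, $\mathbf b=(0,0,1,1)^T$, and $\mathbf w=(1,2,1,2)^T$. Every node has $q_i=1$, $|\mu_i^\epsilon|\to1$, and $s_i^\epsilon,r_i^\epsilon\to0$, so the literal conditions $|\mu_i|=1$, $s_i=\mu_ir_i$ hold. Yet $\mathbf w\notin\operatorname{range}(\mathbf Q)$, because $A_{\not{i}}^\epsilon\to\infty$ at every node.

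Two small repairs are needed.

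First, in the necessity step your bound goes the wrong way. The bracket equals $1/(A_{\not{i}}^\epsilon\mathcal E_w^\epsilon)$, so bounding it requires $A_{\not{i}}^\epsilon$ to be bounded away from zero. The inequality $A_{\not{i}}^{-1}\geq\mathcal E_w$ only bounds $A_{\not{i}}$ from above. The needed lower bound is immediate, e.g.\ $A_{\not{i}}^\epsilon\geq\|\mathbf w_{\not{i}}\|^2/(\lambda_{\max}(\mathbf Q_{\not{i}})+\epsilon)$.

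Second, nodes with $q_i=0$ cannot be absorbed by continuity. Positivity forces $\mathbf c_i=\mathbf 0$, so $\mu_i^\epsilon\equiv0$ for every $\epsilon$, and $|\mu_i|=1$ fails even though $\mathcal E_i=0$. The phase-map statement must therefore be restricted to nodes with $q_i>0$, as the supplement does (``nondegenerate participating nodes''). The range form already covers such nodes automatically.
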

\begin{proof}
    For a singular QFIM, a scalar target $\theta_{\mathbf{a}}=\mathbf{a}^T\bm{\theta}$ is locally estimable if and only if $\mathbf{a}\in\operatorname{range}(\mathbf{Q})$, in which case $\E_{\mathbf{a}} =[\mathbf{a}^T\mathbf{Q}^{+}\mathbf{a}]^{-1}$; otherwise, $\E_{\mathbf{a}}=0$. Applying this criterion to $\mathbf{a}=\mathbf{e}_i$ and $\mathbf{a}=\mathbf{w}$ proves \cref{eq: local_privacy_range}. Moreover, $\E_i=q_i(1-\mu_i^2)$ shows that complete local-information loss requires $|\mu_i|=1$. At this singular boundary, the global EQFI in \cref{eq: phase-map} retains a finite nonzero value only if the mismatch term simultaneously vanishes, $s_i-\mu_i r_i=0$; otherwise, $\E_w\rightarrow0$. This proves \cref{eq: privacy_phase_condition}.
\end{proof}

Intrinsic local privacy therefore lies at the matched-singular boundary where all local information vanishes without an accompanying collapse of the global information. 

For comparison, previous studies introduced the stronger notion of \textit{functional privacy}, which requires every nonzero direction orthogonal to $\mathbf{w}$ to be inaccessible, $\E_v=0$ for all nonzero $\mathbf{v}\perp\mathbf{w}$, while the target remains estimable, $\E_w>0$~\cite{shettell2022private,hassani2025privacy,namkung2026universal,farokhi2026precision}. These requirements force $\operatorname{range}(\mathbf{Q})=\operatorname{span}\{\mathbf{w}\}$ and hence $\mathbf{Q}\propto\mathbf{w}\mathbf{w}^T$~\cite{hassani2025privacy}. For any nonlocalized target satisfying $\mathbf{w}\not\parallel\mathbf{e}_i$ for all $i$, this rank-one form also satisfies \cref{eq: local_privacy_range}; thus, functional privacy implies intrinsic local privacy. Further connections are discussed in Sec.~VI of \cite{supp}. %For fixed diagonal capacities, functional privacy is feasible only if $q_i=\kappa w_i^2$ for every node, so all nodes with nonzero target weights have the same weighted capacity, $q_i/w_i^2=\kappa$. The QFIM is then $\mathbf{Q}=\kappa\mathbf{w}\mathbf{w}^T$, which yields $\E_w=\kappa$ and saturates the bottleneck bound.

By contrast, intrinsic local privacy permits QFIMs of any rank below $N$. To demonstrate this explicitly, consider $N\geq4$ qubit sensors with $H_i=Z_i/2$ and $Q_{ii}=1$ for all $i$, together with the uniform target $\mathbf w=\mathbf1_N/N$. For any prescribed rank $r\in\{1,\ldots,N-1\}$, define the $r$ sign vectors:
\begin{equation}
\begin{aligned}
    \mathbf v^{(0)}=&\mathbf1_N,\\
    \mathbf v^{(j)}=&\mathbf1_N-2(\mathbf e_j+\mathbf e_N),
\end{aligned}
\end{equation}
for $j=1,...,r-1$. We prepare the probe
\begin{equation}\label{eq: privacy-construction}
    |\psi\rangle=\frac{1}{\sqrt{2r}}\sum_{j=0}^{r-1}\left(|\mathbf v^{(j)}\rangle_Z+|-\mathbf v^{(j)}\rangle_Z\right),
\end{equation}
where $|\mathbf v\rangle_Z$ denotes the joint computational-basis state satisfying $Z_i|\mathbf v\rangle_Z=v_i|\mathbf v\rangle_Z$. The symmetric support on $\pm\mathbf v^{(j)}$ gives $\langle Z_i\rangle=0$; hence the pure-state QFIM is therefore $\mathbf Q=\frac{1}{r}\sum_{j=0}^{r-1}\mathbf v^{(j)}\mathbf v^{(j)T}$, with $\operatorname{rank}(\mathbf Q)=r$. 
The span of these linearly independent vectors contains the target direction $\mathbf w=\mathbf v^{(0)}/N$ but contains no individual direction $\mathbf e_i$. Hence \cref{thm:privacy_security} gives $\E_i=0$ for every node, while $\E_w=N^2/r$. Thus, $r$ controls the number of estimable collective modes without compromising intrinsic local privacy. For $r=1$, \cref{eq: privacy-construction} reduces to a GHZ state. Further details are provided in Sec.~VII of \cite{supp}.

\textit{Discussion---}
Weighted DQS is intrinsically a function-estimation problem in the presence of nuisance parameters, when the local parameters are independently unknown. We derived an exact local--global phase map that determines how node--subnetwork correlations redistribute metrological information. The map identifies the optimal and break-even correlation strengths and reveals an overcorrelated regime in which further local-information loss also degrades global precision. For fixed diagonal sensing capacities, the nodewise constraints produce a tight weakest-capacity bottleneck, which is saturated by our recursive family of QFIM designs.

The same phase map identifies intrinsic local non-identifiability with the simultaneous matched-singular boundary. Functional privacy further restricts the QFIM to the target direction and, under fixed diagonals, requires equal weighted capacities. Intrinsic local privacy is less restrictive: our qubit construction realizes it at every QFIM rank from $1$ to $N-1$.

These results establish a correlation-geometric framework connecting resource-constrained precision, optimal network design, and intrinsic privacy. Because the analysis is formulated at the QFIM level, implementations on discrete-variable, continuous-variable, or hybrid platforms must additionally satisfy physical-realizability and measurement-compatibility conditions.

\textit{Acknowledgments---} This work was supported by the National Natural Science Foundation of China (Grant Nos.~12350006 and 92576202), the Quantum Science and Technology--National Science and Technology Major Project (Grant No.~2021ZD0301200), and the USTC Research Funds of the Double First-Class Initiative (Grant No.~YD2030002026).

\bibliography{references}

\end{document}

% --- supplement: supp.tex ---

\title{Supplementary Material}

\maketitle

\tableofcontents

\section{Foundations of Distributed Quantum Metrology}

\subsection{General Framework of Distributed Quantum Sensing}

Distributed quantum sensing (DQS) extends conventional quantum metrology to networks of spatially separated sensor nodes~\cite{zhang2021distributed,ge2018distributed,proctor2018multiparameter,eldredge2018optimal}. We consider an $N$-node quantum sensor network in which node $i$ probes a local parameter $\theta_i$. The complete set of encoded parameters is collected into the vector
\begin{equation}
\boldsymbol{\theta}=(\theta_1,\theta_2,\ldots,\theta_N)^T\in \mathbb{R}^N.
\end{equation}

The network is initialized in a generally multipartite probe state $\rho_0$, which may contain correlations or entanglement across different nodes. We focus on local unitary parameter encoding, under which the output state is
\begin{equation}
\rho_{\boldsymbol{\theta}}=U(\boldsymbol{\theta})\rho_0U^\dagger(\boldsymbol{\theta}),
\end{equation}
with
\begin{equation}
U(\boldsymbol{\theta})=\exp\left(i\sum_{i=1}^{N}H_i\theta_i\right).
\end{equation}
Here, $H_i$ is a Hermitian generator acting nontrivially only on the Hilbert space of node $i$.

Because generators associated with distinct nodes act on disjoint tensor factors, they mutually commute,
\begin{equation}
[H_i,H_j]=0,
\qquad i\neq j.
\end{equation}
Consequently, the global encoding decomposes into parallel local evolutions,
\begin{equation}
U(\boldsymbol{\theta})
=
\bigotimes_{i=1}^{N}
\exp\left(iH_i\theta_i\right).
\end{equation}
The encoding dynamics are therefore local, whereas the probe state and the resulting measurement statistics may remain globally correlated through $\rho_0$.

The objective of DQS is often not to estimate every local parameter separately, but rather to infer a prescribed global property of the field. In this work, the target parameter is a weighted linear combination
\begin{equation}
\theta_w=\mathbf{w}^T\boldsymbol{\theta}=\sum_{i=1}^{N}w_i\theta_i,
\end{equation}
where
\begin{equation}
\mathbf{w}=(w_1,w_2,\ldots,w_N)^T\in\mathbb{R}^N
\end{equation}
is a known weight vector specifying the sensing task. Uniform positive weights describe, for example, estimation of a spatially averaged field, whereas nonuniform or signed weights can represent differential, gradient, or mode-selective sensing. No normalization of $\mathbf{w}$ is assumed unless stated explicitly.

Although the parameter of interest $\theta_w$ is scalar, the encoded state $\rho_{\boldsymbol{\theta}}$ generally depends on the full $N$-dimensional parameter vector. One may complete $\mathbf{w}$ to a basis of parameter space and write
\begin{equation}\boldsymbol{\phi}=\left(\theta_w,\phi_2,\ldots,\phi_N\right)^T,
\end{equation}
where $\phi_2,\ldots,\phi_N$ are independent linear combinations of the local parameters. When these additional combinations are unknown, they act as nuisance parameters in the estimation of $\theta_w$. Therefore, weighted DQS is intrinsically a scalar estimation task embedded in a multiparameter statistical model. Correctly accounting for the resulting nuisance-parameter penalty requires the effective quantum Fisher information developed below.

\subsection{Quantum Fisher Information Matrix and Global Precision Bounds}

The precision of multiparameter quantum estimation is characterized by the quantum Fisher information matrix (QFIM). For any estimator vector $\bm{\hat{\theta}}=(\hat{\theta}_1,\hat{\theta}_2,\ldots,\hat{\theta}_N)^T$ that is locally unbiased at the true parameter value, the symmetric-logarithmic-derivative quantum Cramér--Rao bound gives
\begin{equation}\label{eq: QCRB}
\mathrm{Cov}[\bm{\hat{\theta}}]\ge\frac{1}{\nu}\mathbf{Q}^{-1},
\end{equation}
where $\nu$ is the number of independent repetitions and the matrix inequality is understood in the positive-semidefinite sense. Here,
\begin{equation}
\mathrm{Cov}[\bm{\hat{\theta}}]_{ij}=\mathbb{E}\left[(\hat{\theta}_i-\theta_i)(\hat{\theta}_i-\theta_i)\right].
\end{equation}
\cref{eq: QCRB} assumes that $\mathbf{Q}$ is nonsingular. The corresponding formulation for singular QFIMs is given below.

For the parameterized state $\rho_{\bm{\theta}}$, the QFIM is the real symmetric positive-semidefinite matrix with elements
\begin{equation}
Q_{ij}=\frac{1}{2}\operatorname{Tr}\big[\rho_{\bm{\theta}}\{L_i,L_j\}\big],
\end{equation}
where the symmetric logarithmic derivative $L_i$ is defined implicitly by
\begin{equation}
\frac{\partial\rho_{\bm{\theta}}}{\partial\theta_i}=\frac{1}{2}\left(L_i\rho_{\bm{\theta}}+\rho_{\bm{\theta}}L_i\right).
\end{equation}
For a pure state $|\psi_{\bm{\theta}}\rangle$ undergoing unitary encoding, the QFIM reduces to
\begin{equation}
Q_{ij}=4\operatorname{Re}\left[\langle H_iH_j\rangle-\langle H_i\rangle\langle H_j\rangle\right].
\end{equation}
Because the local generators considered here commute, the covariance is real and the expression becomes $Q_{ij}=4\mathrm{Cov}(H_i,H_j)$.

The diagonal element
\begin{equation}
q_i\equiv Q_{ii}=4\mathrm{Var}(H_i)
\end{equation}
is the QFI associated with $\theta_i$ when all other parameters are regarded as fixed. In this work, $q_i$ is treated as the local sensing capacity determined by the physical resources available at node $i$. The off-diagonal elements $Q_{ij}$ characterize correlations between the parameter responses at different nodes. Although these correlations may originate from multipartite entanglement, nonzero off-diagonal QFIM elements are not, by themselves, an entanglement criterion.

The matrix bound in \cref{eq: QCRB} is not jointly attainable for an arbitrary multiparameter quantum model because the optimal measurements for different parameters may be incompatible. For the pure-state unitary model considered here, commuting local generators imply the weak-commutativity condition
\begin{equation}
\operatorname{Im}\langle\partial_i\psi_{\bm{\theta}}|\partial_j\psi_{\bm{\theta}}\rangle=0,
\end{equation}
and the SLD bound can therefore be attained asymptotically under the usual regularity conditions~\cite{holevo2011probabilistic,liu2020quantum}. For mixed-state probes, additional compatibility conditions may be required.

We now consider estimation of the weighted global parameter
\begin{equation}
\theta_w=\mathbf{w}^T\bm{\theta}.
\end{equation}
For the estimator $\hat{\theta}_w=\mathbf{w}^T\bm{\hat{\theta}}$, \cref{eq: QCRB} implies
\begin{equation}\label{eq: global_qcrb}
\Delta^2\hat{\theta}_w=\mathbf{w}^T\mathrm{Cov}[\bm{\hat{\theta}}]\mathbf{w}\ge\frac{1}{\nu}\mathbf{w}^T\mathbf{Q}^{-1}\mathbf{w}\equiv\frac{1}{\nu\mathcal{E}_w},
\end{equation}
where
\begin{equation}\label{eq: global_eqfi}
\mathcal{E}_w=\left(\mathbf{w}^T\mathbf{Q}^{-1}\mathbf{w}\right)^{-1}
\end{equation}
is the effective quantum Fisher information (EQFI) for the global target. The inverse QFIM in \cref{eq: global_eqfi} incorporates the precision penalty produced by the remaining $N-1$ unknown parameter combinations.

It is useful to distinguish the EQFI from the projected QFI
\begin{equation}\label{eq: projected_qfi}
F_w=\frac{\mathbf{w}^T\mathbf{Q}\mathbf{w}}{(\mathbf{w}^T\mathbf{w})^2},
\end{equation}
which frequently appears in distributed sensing~\cite{hyllus2012fisher,ge2018distributed,proctor2018multiparameter,eldredge2018optimal}. The quantity $F_w$ is the exact QFI of the one-parameter model. It assumes that every parameter combination orthogonal to $\mathbf{w}$ is known or physically constrained, so that the state varies only along the prescribed target direction. By contrast, $\mathcal{E}_w$ describes the genuinely multiparameter setting in which these orthogonal combinations remain unknown nuisance parameters.

For any positive-definite QFIM, the projected QFI upper-bounds the EQFI:
\begin{equation}\label{eq: inequality_chain}
F_w=\frac{\mathbf{w}^T\mathbf{Q}\mathbf{w}}{(\mathbf{w}^T\mathbf{w})^2}\ge\frac{1}{\mathbf{w}^T\mathbf{Q}^{-1}\mathbf{w}}=\mathcal{E}_w.
\end{equation}

\textit{Proof.} Applying the Cauchy--Schwarz inequality to the vectors $\mathbf{Q}^{1/2}\mathbf{w}$ and $\mathbf{Q}^{-1/2}\mathbf{w}$ gives
\begin{equation}
\left(\mathbf{w}^T\mathbf{Q}\mathbf{w}\right)\left(\mathbf{w}^T\mathbf{Q}^{-1}\mathbf{w}\right)\ge(\mathbf{w}^T\mathbf{w})^2.
\end{equation}
Rearranging immediately yields \cref{eq: inequality_chain}. Equality holds if and only if $\mathbf{Q}^{1/2}\mathbf{w}$ and $\mathbf{Q}^{-1/2}\mathbf{w}$ are linearly dependent, or equivalently, $\mathbf{Q}\mathbf{w}=\lambda\mathbf{w}$ for some $\lambda>0$. Hence, $F_w=\mathcal{E}_w$ precisely when the target vector $\mathbf{w}$ is an eigenvector of the QFIM. 

The inequality in \cref{eq: inequality_chain} does not imply that $F_w$ is intrinsically unphysical. Rather, $F_w$ and $\mathcal{E}_w$ answer different estimation questions. The projected quantity $F_w$ characterizes a controlled single-parameter model in which all nuisance directions are absent, whereas $\mathcal{E}_w$ gives the attainable information when the local parameters are independently unknown. Their discrepancy therefore quantifies the nuisance-parameter penalty generated by misalignment between the target direction and the information geometry of the network. This distinction underlies the local--global trade-off and the overcorrelated regime analyzed in the main text.

\subsection{Derivation of the Effective QFI and the Nuisance-Parameter Penalty}

Consider the estimation of a local parameter $\theta_i$. If $\theta_i$ is the only unknown parameter, while all remaining parameters are known or fixed, the single-parameter quantum Cramér--Rao bound gives
\begin{equation}
\Delta^2\hat{\theta}_i\ge\frac{1}{\nu q_i}, \qquad q_i=Q_{ii}=\operatorname{Tr}\left(\rho_{\bm{\theta}}L_i^2\right).
\end{equation}
Here, $q_i$ is the QFI associated with $\theta_i$ when all other parameters are treated as fixed.

When the complete parameter vector $\bm{\theta}=(\theta_1,\theta_2,\ldots,\theta_N)^T$ is unknown, the remaining parameters
\begin{equation}
\bm{\theta}_{\not{i}}=(\theta_1,\ldots,\theta_{i-1},\theta_{i+1},\ldots,\theta_N)^T
\end{equation}
act as nuisance parameters for the estimation of $\theta_i$. The multiparameter quantum Cramér--Rao bound,
\begin{equation}
\mathrm{Cov}[\bm{\hat{\theta}}]\ge\frac{1}{\nu}\mathbf{Q}^{-1},
\end{equation}
then implies
\begin{equation}\label{eq: local_variance_inverse}
\Delta^2\hat{\theta}_i\ge\frac{1}{\nu}(\mathbf{Q}^{-1})_{ii}.
\end{equation}

To evaluate the inverse matrix element, we partition the QFIM into the node--subnetwork form
\begin{equation}\label{eq: node_subnetwork_qfim}
\mathbf{Q}=\begin{pmatrix}q_i&\mathbf{c}_i^T\\\mathbf{c}_i&\mathbf{Q}_{\not{i}}\end{pmatrix},
\end{equation}
where $\mathbf{Q}_{\not{i}}$ is obtained by removing the $i$th row and column of $\mathbf{Q}$, and
\begin{equation}
\mathbf{c}_i=(Q_{i1},\ldots,Q_{i,i-1},Q_{i,i+1},\ldots,Q_{iN})^T
\end{equation}
collects the QFIM correlations between node $i$ and the remaining subnetwork.

Assuming that $\mathbf{Q}_{\not{i}}$ is invertible, the block-matrix inversion formula gives
\begin{equation}
(\mathbf{Q}^{-1})_{ii}=\left(q_i-\mathbf{c}_i^T\mathbf{Q}_{\not{i}}^{-1}\mathbf{c}_i\right)^{-1}.
\end{equation}
The effective quantum Fisher information for the local parameter $\theta_i$ is therefore
\begin{equation}\label{eq: local_eqfi}
\mathcal{E}_i\equiv[(\mathbf{Q}^{-1})_{ii}]^{-1}=q_i-\mathbf{c}_i^T\mathbf{Q}_{\not{i}}^{-1}\mathbf{c}_i.
\end{equation}
Accordingly, the variance bound becomes
\begin{equation}
\Delta^2\hat{\theta}_i\ge\frac{1}{\nu\mathcal{E}_i}.
\end{equation}

The quadratic form $\mathbf{c}_i^T\mathbf{Q}_{\not{i}}^{-1}\mathbf{c}_i$ quantifies the nuisance-parameter penalty induced by the statistical coupling between $\theta_i$ and the remaining unknown parameters. Since $\mathbf{Q}_{\not{i}}^{-1}$ is positive definite, and hence
\begin{equation}
    0\le\mathcal{E}_i\le q_i.
\end{equation}
Thus, $q_i$ quantifies the information available when the nuisance parameters are known, whereas $\mathcal{E}_i$ quantifies the information that remains accessible when they are unknown. The equality $\mathcal{E}_i=q_i$ holds if and only if $\mathbf{c}_i=\mathbf{0}$, namely, when $\theta_i$ is decoupled from all nuisance directions at the QFIM level.

It is useful to introduce the normalized local EQFI
\begin{equation}\label{eq: normalized_local_eqfi}
\widetilde{\mathcal{E}}_i\equiv\frac{\mathcal{E}_i}{q_i}=1-\frac{\mathbf{c}_i^T\mathbf{Q}_{\not{i}}^{-1}\mathbf{c}_i}{q_i},
\end{equation}
which represents the fraction of the local sensing capacity that remains operationally accessible in the full multiparameter model. Equivalently, the fractional local information loss is
\begin{equation}\label{eq: fractional_local_loss}
1-\widetilde{\mathcal{E}}_i=\frac{\mathbf{c}_i^T\mathbf{Q}_{\not{i}}^{-1}\mathbf{c}_i}{q_i}.
\end{equation}

\subsection{Singular QFIMs and the Moore--Penrose Pseudoinverse}

The preceding multiparameter analysis assumed that the quantum Fisher information matrix is positive definite, $\mathbf{Q}>0$, so that its inverse is well defined. In many quantum sensor networks, however, symmetries or perfect correlations can render the QFIM singular, $\det(\mathbf{Q})=0$~\cite{liu2020quantum,ye2022quantum,candeloro2024dimension}. Singular QFIMs are particularly relevant to the intrinsic privacy conditions discussed in the main text.

Geometrically, a singular QFIM indicates that the encoded state is locally insensitive to one or more directions in parameter space. Specifically, if a vector $\mathbf{u}$ satisfies
\begin{equation}
\mathbf{Q}\mathbf{u}=\mathbf{0},
\end{equation}
then the parameter combination associated with $\mathbf{u}$ cannot be locally distinguished at the operating point. Such a direction represents a redundancy of the statistical model or an unidentifiable combination of the physical parameters.

As an illustrative example, consider a two-node network initialized in the Bell state
\begin{equation}
|\psi_0\rangle=\frac{1}{\sqrt{2}}\left(|00\rangle+|11\rangle\right),
\end{equation}
with local generators $H_1=\sigma_z\otimes I$ and $H_2=I\otimes\sigma_z$. Under parallel unitary encoding, the state becomes
\begin{equation}
|\psi_{\bm{\theta}}\rangle=\exp\left[i(\theta_1H_1+\theta_2H_2)\right]|\psi_0\rangle=\frac{1}{\sqrt{2}}\left[e^{i(\theta_1+\theta_2)}|00\rangle+e^{-i(\theta_1+\theta_2)}|11\rangle\right].
\end{equation}
The encoded state depends only on the sum $\theta_+=\theta_1+\theta_2$ and is invariant under changes along the difference direction $\theta_-=\theta_1-\theta_2$. The corresponding QFIM is
\begin{equation}
\mathbf{Q}=4\begin{pmatrix}1&1\\ 1&1\end{pmatrix}.
\end{equation}
Its nonzero eigenvalue is $8$, associated with the estimable direction $(1,1)^T$, while its zero eigenvalue is associated with the null direction $(1,-1)^T$. The network can therefore resolve the sum of the two local parameters but cannot independently identify their difference.

When $\mathbf{Q}$ is singular, the ordinary inverse $\mathbf{Q}^{-1}$ does not exist. The estimation problem must instead be restricted to the support of the QFIM and described using the Moore--Penrose pseudoinverse.

Because $\mathbf{Q}$ is real, symmetric, and positive semidefinite, the full parameter space admits the orthogonal decomposition
\begin{equation}
\mathbb{R}^N=\operatorname{range}(\mathbf{Q})\oplus\operatorname{ker}(\mathbf{Q}).
\end{equation}
For the spectral decomposition $\mathbf{Q}=\sum_{k=1}^N\lambda_k|\nu_k\rangle\langle\nu_k|$, these subspaces are
\begin{equation}
\begin{aligned}
    \operatorname{range}(\mathbf{Q})=&\operatorname{span}\{|\nu_k\rangle:\lambda_k>0\},\\
    \operatorname{ker}(\mathbf{Q})=&\operatorname{span}\{|\nu_k\rangle:\lambda_k=0\}.
\end{aligned}
\end{equation}
The range of $\mathbf{Q}$ is the estimable subspace, whereas its kernel contains the locally unidentifiable parameter directions.

The Moore--Penrose pseudoinverse is obtained by inverting only the nonzero eigenvalues:
\begin{equation}\label{eq: qfim_pseudoinverse}
\mathbf{Q}^+=\sum_{\lambda_k>0}\lambda_k^{-1}|\nu_k\rangle\langle\nu_k|.
\end{equation}
It acts as the ordinary inverse on $\operatorname{range}(\mathbf{Q})$ and vanishes on $\operatorname{ker}(\mathbf{Q})$.

More generally, consider a scalar target parameter $\theta_{\mathbf{a}}=\mathbf{a}^T\bm{\theta}$,  specified by a vector $\mathbf{a}\in\mathbb{R}^N$. This target is locally estimable if and only if
\begin{equation}\label{eq: estimability_condition}
\mathbf{a}\in\operatorname{range}(\mathbf{Q}).
\end{equation}
Because $\mathbf{Q}$ is symmetric, this condition is equivalent to
\begin{equation}
\mathbf{a}^T\mathbf{u}=0 \qquad \forall \mathbf{u}\in\operatorname{ker}(\mathbf{Q}).
\end{equation}
Thus, an estimable target cannot contain any component along an unidentifiable parameter direction.

When \cref{eq: estimability_condition} is satisfied, the generalized quantum Cramér--Rao bound is
\begin{equation}\label{eq: singular_scalar_qcrb}
\Delta^2\hat{\theta}_{\mathbf{a}}\ge\frac{1}{\nu}\mathbf{a}^T\mathbf{Q}^+\mathbf{a}.
\end{equation}
The corresponding effective quantum Fisher information is therefore
\begin{equation}\label{eq: singular_general_eqfi}
\mathcal{E}_{\mathbf{a}}=\left(\mathbf{a}^T\mathbf{Q}^+\mathbf{a}\right)^{-1}, \qquad \mathbf{a}\in\operatorname{range}(\mathbf{Q}).
\end{equation}
If $\mathbf{a}\notin\operatorname{range}(\mathbf{Q})$, no locally unbiased estimator with finite variance exists for the complete target parameter. Its effective information is consequently defined to vanish:
\begin{equation}\label{eq: general_eqfi_piecewise}
\mathcal{E}_{\mathbf{a}}=\begin{cases}\left(\mathbf{a}^T\mathbf{Q}^+\mathbf{a}\right)^{-1},&\mathbf{a}\in\operatorname{range}(\mathbf{Q}),\\ 0,&\mathbf{a}\notin\operatorname{range}(\mathbf{Q}).\end{cases}
\end{equation}

For the weighted global target $\theta_w=\mathbf{w}^T\bm{\theta}$, \cref{eq: general_eqfi_piecewise} gives
\begin{equation}\label{eq: singular_global_eqfi}
\mathcal{E}_w=\begin{cases}\left(\mathbf{w}^T\mathbf{Q}^+\mathbf{w}\right)^{-1},&\mathbf{w}\in\operatorname{range}(\mathbf{Q}),\\ 0,&\mathbf{w}\notin\operatorname{range}(\mathbf{Q}).\end{cases}
\end{equation}
Throughout this work, whenever a finite global precision is considered, we assume that $\mathbf{w}\in\operatorname{range}(\mathbf{Q})$.

The same construction determines the local EQFI in singular cases. Estimation of the individual parameter $\theta_i$ corresponds to the target vector $\mathbf{e}_i$, where $\mathbf{e}_i$ is the $i$th Cartesian basis vector. Hence,
\begin{equation}\label{eq: singular_local_eqfi}
\mathcal{E}_i=\begin{cases}\left(\mathbf{e}_i^T\mathbf{Q}^+\mathbf{e}_i\right)^{-1}=[(\mathbf{Q}^+)_{ii}]^{-1},&\mathbf{e}_i\in\operatorname{range}(\mathbf{Q}),\\ 0,&\mathbf{e}_i\notin\operatorname{range}(\mathbf{Q}).\end{cases}
\end{equation}

This expression provides the singular extension of the local EQFI and directly yields the local-privacy criterion used in the main text: an individual parameter is metrologically inaccessible precisely when $\mathbf{e}_i\notin\operatorname{range}(\mathbf{Q})$, even though a global target may remain estimable when $\mathbf{w}\in\operatorname{range}(\mathbf{Q})$.

\section{Local Resource Constraints and Proof of Theorem I}

\subsection{Local Resources and Local Sensing Capacities}

The diagonal QFIM element $q_i\equiv Q_{ii}$ is the single-parameter QFI associated with $\theta_i$ when all other parameters are regarded as fixed. In this work, $q_i$ is treated as the local sensing capacity of node $i$. It is determined by the local generator, the physical resources assigned to the node, and the allowed probe-state preparation. Fixing the set ${q_i}$ therefore provides a general QFIM-level description of locally constrained quantum sensor networks.

For a pure probe undergoing local unitary encoding $U(\bm{\theta})=\bigotimes_{i=1}^N\exp(-i\theta_iH_i)$, the diagonal QFIM element is
\begin{equation}
q_i=4\Delta^2H_i=4\left(\langle H_i^2\rangle-\langle H_i\rangle^2\right).
\end{equation}
This relation connects the abstract capacity $q_i$ to the resources available in different sensing architectures.

\textit{Discrete-variable resources.---} Consider node $i$ containing $n_i$ two-level probes, with local generator $H_i=\frac{1}{2}\sum_{k=1}^{n_i}\sigma_z^{(k)}$. 
For product probes optimized for local phase estimation, the generator variance is $\Delta^2H_i=n_i/4$, giving the standard-quantum-limit capacity
\begin{equation}
q_i=n_i.
\end{equation}
More generally, $q_i\le n_i$ for pure product probes under this generator. If the node supports intra-node entanglement, a GHZ state maximizes the generator variance, $\Delta^2H_i=n_i^2/4$, and yields
\begin{equation}
q_i=n_i^2.
\end{equation}
Thus, for a fixed local particle number, the accessible capacity ranges from SQL scaling for product probes to Heisenberg scaling for optimally entangled probes~\cite{giovannetti2006quantum,giovannetti2011advances,pezze2018quantum}.

\textit{Continuous-variable resources.---} Consider an optical node in which a local phase is generated by the photon-number operator $H_i=\hat{n}_i=\hat{a}_i^\dagger\hat{a}_i$. 
Assuming access to an appropriate phase reference, a coherent state $|\alpha_i\rangle$ with mean photon number $E_i=\langle\hat{n}_i\rangle=|\alpha_i|^2$ has $\Delta^2\hat{n}_i=E_i$ and therefore
\begin{equation}
q_i=4E_i.
\end{equation}
For a single-mode squeezed-vacuum state with squeezing parameter $r_i$, $E_i=\sinh^2r_i$ and $ \Delta^2\hat{n}_i=2E_i(E_i+1)$, and hence
\begin{equation}
    q_i=8E_i(E_i+1)=8\sinh^2r_i\cosh^2r_i.
\end{equation}
In the strongly squeezed regime, $r_i\gg1$, this becomes
\begin{equation}
    q_i\simeq\frac{1}{2}e^{4r_i}\simeq8E_i^2.
\end{equation}
These examples illustrate the linear energy scaling of coherent probes and the quadratic energy scaling attainable within the squeezed-vacuum family~\cite{zhuang2018distributed}.

Our analysis avoids this ambiguity by taking $q_i$ itself as the fixed operational capacity of node $i$, irrespective of the microscopic resource responsible for that capacity.

In both discrete- and continuous-variable systems, $q_i$ incorporates the amount of local physical resources together with the local state-preparation capability. Expressing the global precision bound solely in terms of ${q_i}$ separates two optimization levels: \textit{local probe engineering determines the diagonal capacities, whereas inter-node correlations determine how these capacities are redistributed across local and global parameter directions.}

\subsection{Proof of Theorem I}

Let $\mathbf{e}_i$ denote the $i$th Cartesian basis vector, so that $w_i=\mathbf{e}_i^T\mathbf{w}$ and $q_i=\mathbf{e}_i^T\mathbf{Q}\mathbf{e}_i$. We first assume that $\mathbf{Q}>0$. Applying the Cauchy--Schwarz inequality gives
\begin{equation}
w_i^2=\left(\mathbf{e}_i^T\mathbf{Q}^{1/2}\mathbf{Q}^{-1/2}\mathbf{w}\right)^2\le\left(\mathbf{e}_i^T\mathbf{Q}\mathbf{e}_i\right)\left(\mathbf{w}^T\mathbf{Q}^{-1}\mathbf{w}\right)=\frac{q_i}{\mathcal{E}_w}.
\end{equation}
Therefore, for every node with $w_i\neq0$,
\begin{equation}\label{eq: individual_bottleneck}
\mathcal{E}_w\le\frac{q_i}{w_i^2}.
\end{equation}

The same result holds when $\mathbf{Q}$ is singular, provided that the global target is estimable, $\mathbf{w}\in\operatorname{range}(\mathbf{Q})$. In this case,
\begin{equation}
\mathbf{w}=\mathbf{Q}^{1/2}(\mathbf{Q}^{+})^{1/2}\mathbf{w},
\end{equation}
and hence
\begin{equation}
w_i^2=\left(\mathbf{e}_i^T\mathbf{Q}^{1/2}(\mathbf{Q}^{+})^{1/2}\mathbf{w}\right)^2\le\left(\mathbf{e}_i^T\mathbf{Q}\mathbf{e}_i\right)\left(\mathbf{w}^T\mathbf{Q}^+\mathbf{w}\right)=\frac{q_i}{\mathcal{E}_w}.
\end{equation}
Thus, \cref{eq: individual_bottleneck} remains valid for both nonsingular and singular QFIMs.

Define the intrinsic weighted capacity
\begin{equation}
\kappa_i=\begin{cases}q_i/w_i^2,&w_i\neq0,\\
+\infty,&w_i=0.\end{cases}
\end{equation}
Since \cref{eq: individual_bottleneck} holds for every node, the global EQFI satisfies
\begin{equation}\label{eq: bottleneck_bound_sm}
\mathcal{E}_w\le\kappa_{\min}, \qquad \kappa_{\min}\equiv\min_i\kappa_i.
\end{equation}

The physical interpretation is direct. Node $i$ has local sensing capacity $q_i$, while the global task requires that its parameter enter with weight $w_i$. The ratio $\kappa_i=q_i/w_i^2$ therefore measures the maximum global information that node $i$ can support under the prescribed task. The smallest $\kappa_i$ determines the network precision, independently of how strongly the remaining nodes are correlated. This is the quantum-sensing analogue of a barrel effect: the global capacity is fixed by the weakest required component.

The theorem also specifies when Heisenberg scaling with the network size is attainable. Suppose the local capacities are uniformly bounded as
\begin{equation}
0<q_{\min}\le q_i\le q_{\max}<\infty,
\end{equation}
independently of $N$, and the target weights are normalized and delocalized such that $\sum_{i=1}^N |w_i|=1$ and $ \max_i|w_i|=\Theta(1/N)$. 
Then $\kappa_{\min}=\Theta(N^2)$, and the tightness of the bottleneck bound gives
\begin{equation}
\mathcal{E}_{w,\max}=\Theta(N^2).
\end{equation}
For a fixed number of repetitions, the corresponding variance scales as
\begin{equation}
\Delta^2\hat{\theta}_w\ge\frac{1}{\nu\mathcal{E}_{w,\max}}=\Theta(N^{-2}),
\end{equation}
which is the Heisenberg scaling associated with a delocalized weighted sensing task.

\section{Derivation and Analysis of the Node--Subnetwork Phase Map}\label{sec: phase-map}

We derive the exact relationship between the local EQFI $\mathcal{E}_i$ of an arbitrary node $i$ and the global EQFI $\mathcal{E}_w$ associated with the weighted parameter $\theta_w=\mathbf{w}^T\bm{\theta}$. Consider the node--subnetwork decompositions
\begin{equation}
\mathbf{Q}=\begin{pmatrix}q_i&\mathbf{c}_i^T\\ \mathbf{c}_i&\mathbf{Q}_{\not{i}}\end{pmatrix},\qquad \mathbf{w}=\begin{pmatrix}w_i\\ \mathbf{w}_{\not{i}}\end{pmatrix},
\end{equation}
where $\mathbf{Q}_{\not{i}}\in\mathbb{R}^{(N-1)\times(N-1)}$ and $\mathbf{w}_{\not{i}}\in\mathbb{R}^{N-1}$ describe the complementary subnetwork. We assume throughout the main derivation that $q_i>0$, $\mathbf{Q}_{\not{i}}>0$, and $\mathbf{w}_{\not{i}}\neq\mathbf{0}$. Singular QFIMs are treated using the estimability conditions and Moore--Penrose pseudoinverse introduced in the preceding section.

\subsection{Derivation of the Phase Map}

Choose a unit vector $\mathbf{v}_i$ along the node--subnetwork correlation axis and decompose the correlation vector as $\mathbf{c}_i=c_i\mathbf{v}_i$ with $c_i\in\mathbb{R}$ and $|\mathbf{v}_i|=1$, where the scalar amplitude $c_i$ carries the sign of the correlation. The orientation of $\mathbf{v}_i$ is chosen such that $D_{\not{i}}\equiv\mathbf{w}_{\not{i}}^T\mathbf{Q}_{\not{i}}^{-1}\mathbf{v}_i\ge0$. 
Thus, positive and negative values of $c_i$ correspond, respectively, to correlations aligned and misaligned with the complementary target direction. When $\mathbf{c}_i=\mathbf{0}$, the choice of $\mathbf{v}_i$ is immaterial and the results below are recovered by setting $c_i=0$.

The local EQFI follows from the Schur complement:
\begin{equation}\label{eq:local_eqfi_schur_phase}
\mathcal{E}_i=q_i-\mathbf{c}_i^T\mathbf{Q}_{\not{i}}^{-1}\mathbf{c}_i=q_i-c_i^2B_{\not{i}},
\end{equation}
where
\begin{equation}\label{eq:phase_map_ABD}
A_{\not{i}}\equiv\mathbf{w}_{\not{i}}^T\mathbf{Q}_{\not{i}}^{-1}\mathbf{w}_{\not{i}},\qquad B_{\not{i}}\equiv\mathbf{v}_i^T\mathbf{Q}_{\not{i}}^{-1}\mathbf{v}_i.
\end{equation}
Under the assumptions $q_i>0$, $\mathbf{Q}_{\not{i}}>0$, and $\mathbf{w}_{\not{i}}\neq\mathbf{0}$, both $A_{\not{i}}$ and $B_{\not{i}}$ are strictly positive. Moreover, $A_{\not{i}}$, $B_{\not{i}}$, and $D_{\not{i}}$ form the Gram matrix $\mathbf{G}$
\begin{equation}
\mathbf{G} = \begin{pmatrix}A_{\not{i}}&D_{\not{i}}\\ D_{\not{i}}&B_{\not{i}}\end{pmatrix}\succeq0,
\end{equation}
which implies $D_{\not{i}}^2\le A_{\not{i}}B_{\not{i}}$.

We introduce the dimensionless parameters
\begin{equation}\label{eq:phase_parameters_sm}
\mu_i\equiv \frac{c_i}{\sqrt{q_i/B_{\not i}}},\qquad r_i\equiv\frac{D_{\not{i}}}{\sqrt{A_{\not{i}}B_{\not{i}}}},\qquad s_i\equiv\frac{w_i}{\sqrt{q_iA_{\not{i}}}}.
\end{equation}

Here, $\mu_i\in [-1,1]$ denotes the normalized correlation strength between the node and the sub-network. $\mu=+1$ means the maximum aligned correlation, $\mu=-1$ means the maximum mis-aligned correlation, and $\mu=0$ means the decoupled case.

The Gram-matrix inequality gives $0\le r_i\le1$. Meanwhile, positive semidefiniteness of the full QFIM requires the Schur complement in \cref{eq:local_eqfi_schur_phase} to be nonnegative, and hence $|\mu_i|\le1$. The local phase map therefore becomes
\begin{equation}\label{eq:local_map}
\mathcal{E}_i(\mu_i)=q_i(1-\mu_i^2).
\end{equation}

We next determine the global EQFI. For $|\mu_i|<1$, the Schur complement is strictly positive, and block inversion gives
\begin{equation}\label{eq:global_inverse_block}
\mathcal{E}_w^{-1}=\mathbf{w}^T\mathbf{Q}^{-1}\mathbf{w}=A_{\not{i}}+\frac{\left(w_i-\mathbf{c}_i^T\mathbf{Q}_{\not{i}}^{-1}\mathbf{w}_{\not{i}}\right)^2}{\mathcal{E}_i}.
\end{equation}
The second term represents the inverse-precision penalty associated with incorporating node $i$. From the definitions in \cref{eq:phase_parameters_sm},
\begin{equation}
w_i=s_i\sqrt{q_iA_{\not{i}}},\qquad \mathbf{c}_i^T\mathbf{Q}_{\not{i}}^{-1}\mathbf{w}_{\not{i}}=c_iD_{\not{i}}=\mu_ir_i\sqrt{q_iA_{\not{i}}}.
\end{equation}
Substituting these identities and \cref{eq:local_map} into \cref{eq:global_inverse_block} yields
\begin{equation}
\mathcal{E}_w^{-1}=A_{\not{i}}\left[1+\frac{(s_i-r_i\mu_i)^2}{1-\mu_i^2}\right].
\end{equation}
The exact global phase map is consequently
\begin{equation}\label{eq:global_map}
\mathcal{E}_w(\mu_i)=\frac{1}{A_{\not{i}}}\left[1+\frac{(s_i-r_i\mu_i)^2}{1-\mu_i^2}\right]^{-1}.
\end{equation}
Together, \cref{eq:local_map} and \cref{eq:global_map} reproduce the node--subnetwork phase map stated in Theorem II in the main text. The singular endpoints $|\mu_i|=1$ are understood through the corresponding pseudoinverse estimability condition and are analyzed separately below.

\subsection{Sign Symmetry}

The phase map is invariant under the simultaneous transformation
\begin{equation}
s_i\rightarrow-s_i,\qquad \mu_i\rightarrow-\mu_i.
\end{equation}
Indeed, $\mathcal{E}_i$ depends only on $\mu_i^2$, whereas $\mathcal{E}_w$ depends on $s_i$ and $\mu_i$ through $(s_i-r_i\mu_i)^2$. The phase-map trajectory for $s_i<0$ is therefore identical as a set of points to that for $|s_i|$, but it is traversed with the aligned and misaligned correlation branches interchanged.

Accordingly, we take $s_i>0$ in the following analysis without loss of generality. For arbitrary signed weights, the aligned branch is defined by $\sgn(\mu_i)=\sgn(s_i)$, and the optimal signed correlation is restored by multiplying the expressions below by $\sgn(s_i)$.

\subsection{Sub-References, Trade-Off and overcorrelated Regimes}

For fixed $s_i$ and $r_i$, varying $\mu_i$ changes the strength and sign of the node--subnetwork correlation while leaving the task profile and correlation axis fixed. Differentiating \cref{eq:local_map,eq:global_map} gives
\begin{equation}\label{eq:phase_derivatives}
\frac{d\mathcal{E}_i}{d\mu_i}=-2q_i\mu_i,\qquad \frac{d\mathcal{E}_w}{d\mu_i}=-\frac{2(s_i-r_i\mu_i)(s_i\mu_i-r_i)}{A_{\not{i}}(1-\mu_i^2)^2}\left[1+\frac{(s_i-r_i\mu_i)^2}{1-\mu_i^2}\right]^{-2}.
\end{equation}
For $s_i,r_i>0$, the algebraic extreme points of $\mathcal{E}_w$ are
\begin{equation}
\mu_i=\frac{s_i}{r_i},\text{ or } \mu_i=\frac{r_i}{s_i}.
\end{equation}
Their product is unity, so for $s_i\neq r_i$ exactly one lies in the physical interval $(0,1)$. The unique physical maximum occurs at
\begin{equation}\label{eq:optimal_mu_sm}
\mu_i^*=\min\left\{\frac{s_i}{r_i},\frac{r_i}{s_i}\right\}.
\end{equation}
For a general signed weight,
\begin{equation}
\mu_i^*=\sgn(s_i)\min\left\{\left|\frac{s_i}{r_i}\right|,\left|\frac{r_i}{s_i}\right|\right\}.
\end{equation}

At the decoupled point $\mu_i=0$,
\begin{equation}\label{eq:decoupled_reference_sm}
\mathcal{E}_i=q_i,\qquad \mathcal{E}_{w,\mathrm{dec}}\equiv\mathcal{E}_w(0)=\frac{1}{A_{\not{i}}(1+s_i^2)}.
\end{equation}
The quantity $\mathcal{E}_{w,\mathrm{dec}}$ is a decoupled reference rather than a universal SQL, because the complementary subnetwork described by $\mathbf{Q}_{\not{i}}$ may itself contain nonclassical correlations.

For the misaligned branch $\mu_i<0$, both EQFIs increase as $\mu_i$ approaches zero:
\begin{equation}
\frac{d\mathcal{E}_i}{d\mu_i}>0,\qquad \frac{d\mathcal{E}_w}{d\mu_i}>0,\qquad -1<\mu_i<0.
\end{equation}
Moreover,
\begin{equation}
\mathcal{E}_w(\mu_i)<\mathcal{E}_{w,\mathrm{dec}},\qquad -1<\mu_i<0,
\end{equation}
so the entire misaligned branch lies in the sub-reference regime.

On the aligned branch, $0<\mu_i<1$, the local EQFI decreases monotonically. Before the optimal correlation is reached,
\begin{equation}
0<\mu_i<\mu_i^*:\qquad \frac{d\mathcal{E}_i}{d\mu_i}<0,\qquad \frac{d\mathcal{E}_w}{d\mu_i}>0.
\end{equation}
Thus,
\begin{equation}
\frac{d\mathcal{E}_w}{d\mathcal{E}_i}<0,
\end{equation}
which defines the local--global trade-off regime: increasing the global EQFI is accompanied by a loss of locally accessible information.

Beyond the optimum,
\begin{equation}
\mu_i^*<\mu_i<1:\qquad \frac{d\mathcal{E}_i}{d\mu_i}<0,\qquad \frac{d\mathcal{E}_w}{d\mu_i}<0,
\end{equation}
and consequently
\begin{equation}
\frac{d\mathcal{E}_w}{d\mathcal{E}_i}>0.
\end{equation}
Both local and global EQFIs now decrease as the correlation strength grows. This is the overcorrelated regime.

For $\mu_i\neq0$, the slope of the phase-map trajectory is
\begin{equation}
\frac{d\mathcal{E}_w}{d\mathcal{E}_i}=\frac{d\mathcal{E}_w/d\mu_i}{d\mathcal{E}_i/d\mu_i}.
\end{equation}
At $\mu_i=0$, $d\mathcal{E}_i/d\mu_i=0$, while $d\mathcal{E}_w/d\mu_i\neq0$ for generic $sr\neq0$. The trajectory therefore has a vertical tangent at the decoupled point rather than a stationary slope.

The global EQFI returns to the decoupled reference on the aligned branch at the break-even correlation
\begin{equation}\label{eq:break_even_sm}
\mu_{i}^{\mathrm{BE}}=\frac{2sr}{s_i^2+r_i^2} = \frac{2}{s_i/r_i+r_i/s_i}.
\end{equation}
For $s_i,r_i>0$ and $s_i\neq r_i$,
\begin{equation}
0<\mu_i^*\le\mu_{i}^{\mathrm{BE}}<1.
\end{equation}
Thus,
\begin{equation}
0<\mu_i<\mu_{i}^{\mathrm{BE}}\Rightarrow\mathcal{E}_w>\mathcal{E}_{w,\mathrm{dec}},\qquad \mu_{i}^{\mathrm{BE}}<\mu_i<1\Rightarrow\mathcal{E}_w<\mathcal{E}_{w,\mathrm{dec}}.
\end{equation}

\begin{figure}[t]
\centering
\includegraphics[width=\linewidth]{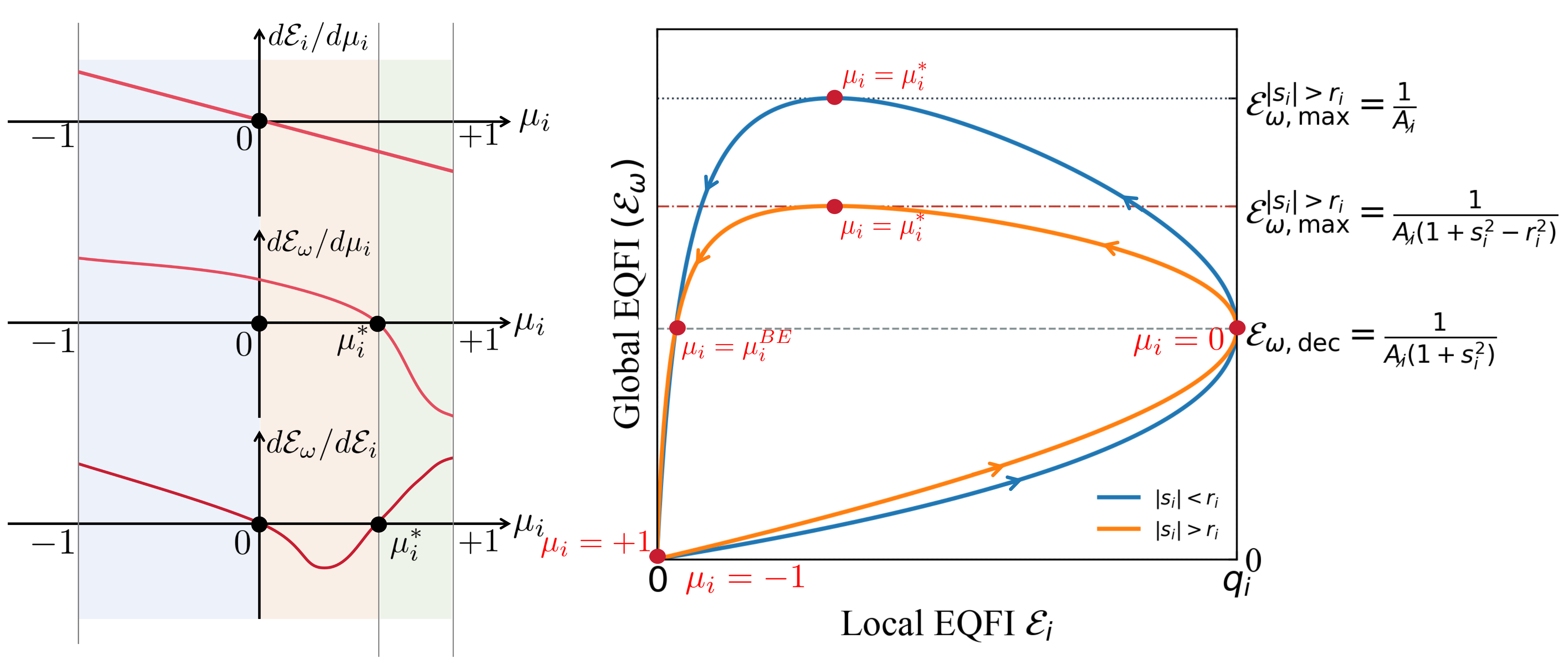}
\caption{\textit{Generic node--subnetwork phase map.} (a) Signs of $d\mathcal{E}_i/d\mu_i$, $d\mathcal{E}_w/d\mu_i$, and $d\mathcal{E}_w/d\mathcal{E}_i$ for $s_i>0$, $r_i>0$, and $s_i\neq r_i$. (b) Corresponding parametric trajectory $(\mathcal{E}_i(\mu_i),\mathcal{E}_w(\mu_i))$. The aligned branch first enters the local--global trade-off regime, reaches its maximum at $\mu_i^*$, and subsequently enters the overcorrelated regime.}
\label{fig: parameter-function-1}
\end{figure}

\subsection{Maximum Global EQFI}

The value of the maximum depends on the relative magnitudes of $s_i$ and $r_i$.

\textit{Subnetwork-ceiling branch, $0<s_i\le r_i$.---} The physical stationary point is
\begin{equation}
\mu_i^*=\frac{s_i}{r_i},
\end{equation}
at which the node-dependent contribution in \cref{eq:global_inverse_block} vanishes:
\begin{equation}
s_i-r_i\mu_i^*=0.
\end{equation}
The corresponding EQFIs are
\begin{equation}\label{eq:subnetwork_ceiling_sm}
\mathcal{E}_i(\mu_i^*)=q_i\left(1-\frac{s_i^2}{r_i^2}\right),\qquad \mathcal{E}_{w,\max}=\frac{1}{A_{\not{i}}}.
\end{equation}
The global precision reaches the absolute ceiling set by the complementary subnetwork.

\textit{Bottleneck-limited branch, $s_i>r_i$.---} The physical stationary point is
\begin{equation}
\mu_i^*=\frac{r_i}{s_i},
\end{equation}
and the maximum becomes
\begin{equation}\label{eq:bottleneck_branch_sm}
\mathcal{E}_i(\mu_i^*)=q_i\left(1-\frac{r_i^2}{s_i^2}\right),\qquad \mathcal{E}_{w,\max}=\frac{1}{A_{\not{i}}(1+s_i^2-r_i^2)}.
\end{equation}
Since $r_i\le1$,
\begin{equation}
\mathcal{E}_{w,\max}\le\frac{1}{A_{\not{i}}s_i^2}=\frac{q_i}{w_i^2},
\end{equation}
which recovers the node-wise bottleneck bound.

The boundary between the two branches is determined by
\begin{equation}
|s_i|\lessgtr r_i\quad\Longleftrightarrow\quad\frac{|w_i|}{\sqrt{q_i}}\lessgtr\frac{|D_{\not{i}}|}{\sqrt{B_{\not{i}}}}.
\end{equation}
The left-hand side is fixed by the weighted capacity of node $i$, whereas the right-hand side characterizes the alignment between the complementary weight profile and the node--subnetwork correlation axis.

For $s_i\neq r_i$, both EQFIs vanish at the aligned singular boundary:
\begin{equation}
\lim_{\mu_i\rightarrow1}\mathcal{E}_i(\mu_i)=0,\qquad \lim_{\mu_i\rightarrow1}\mathcal{E}_w(\mu_i)=0.
\end{equation}
The strong-correlation boundary therefore preserves finite global information only under the matched condition considered next.

\subsection{Matched Singular Boundary}

The matched case $|s_i|=r_i$ is the critical boundary at which the overcorrelated regime disappears. By the sign symmetry, it is sufficient to consider $s_i=r_i>0$. \cref{eq:global_map} then becomes
\begin{equation}
\mathcal{E}_w(\mu_i)=\frac{1}{A_{\not{i}}}\left[1+r_i^2\frac{1-\mu_i}{1+\mu_i}\right]^{-1}.
\end{equation}
It increases monotonically on the aligned branch and satisfies
\begin{equation}\label{eq:matched_limit_sm}
\lim_{\mu_i\rightarrow1}\mathcal{E}_w(\mu_i)=\frac{1}{A_{\not{i}}},\qquad \lim_{\mu_i\rightarrow1}\mathcal{E}_i(\mu_i)=0.
\end{equation}
Thus, the trade-off regime extends to the singular endpoint $\mu_i^*=1$, and the overcorrelated interval collapses.

At $\mu_i=1$, the Schur complement vanishes and the ordinary inverse of $\mathbf{Q}$ does not exist. We now verify \cref{eq:matched_limit_sm} directly using the pseudoinverse. At this point,
\begin{equation}
\mathbf{c}_i=\sqrt{\frac{q_i}{B_{\not{i}}}}\mathbf{v}_i,
\end{equation}
and the full QFIM has the null vector
\begin{equation}\label{eq:matched_null_vector}
\mathbf{u}_0=\begin{pmatrix}1\ -\mathbf{Q}_{\not{i}}^{-1}\mathbf{c}_i\end{pmatrix}=\begin{pmatrix}1\ -\sqrt{\frac{q_i}{B_{\not{i}}}}\,\mathbf{Q}_{\not{i}}^{-1}\mathbf{v}_i\end{pmatrix}.
\end{equation}
Indeed, $\mathbf{Q}\mathbf{u}_0=\mathbf{0}$ because the Schur complement is zero. Since $\mathbf{Q}_{\not{i}}>0$, the full QFIM has rank $N-1$ and a one-dimensional kernel spanned by $\mathbf{u}_0$.

The matching condition $s_i=r_i$ implies
\begin{equation}
w_i=\sqrt{\frac{q_i}{B_{\not{i}}}}D_{\not{i}}=\mathbf{c}_i^T\mathbf{Q}_{\not{i}}^{-1}\mathbf{w}_{\not{i}},
\end{equation}
and hence
\begin{equation}
\mathbf{w}^T\mathbf{u}_0=w_i-\mathbf{c}_i^T\mathbf{Q}_{\not{i}}^{-1}\mathbf{w}_{\not{i}}=0.
\end{equation}
Therefore, $\mathbf{w}\in\operatorname{range}(\mathbf{Q})$, and the global target remains estimable.

Now define
\begin{equation}
\mathbf{z}=\begin{pmatrix}0\\ \mathbf{Q}_{\not{i}}^{-1}\mathbf{w}_{\not{i}}\end{pmatrix}.
\end{equation}
The matching relation gives
\begin{equation}
\mathbf{Q}\mathbf{z}=\begin{pmatrix}\mathbf{c}_i^T\mathbf{Q}_{\not{i}}^{-1}\mathbf{w}_{\not{i}}\\ \mathbf{w}_{\not{i}}\end{pmatrix}=\mathbf{w}.
\end{equation}
Because $\mathbf{z}-\mathbf{Q}^+\mathbf{w}\in\operatorname{ker}(\mathbf{Q})$ and $\mathbf{w}\perp\operatorname{ker}(\mathbf{Q})$, it follows that
\begin{equation}
\mathbf{w}^T\mathbf{Q}^+\mathbf{w}=\mathbf{w}^T\mathbf{z}=\mathbf{w}_{\not{i}}^T\mathbf{Q}_{\not{i}}^{-1}\mathbf{w}_{\not{i}}=A_{\not{i}}.
\end{equation}
Therefore,
\begin{equation}
\mathcal{E}_w=\left(\mathbf{w}^T\mathbf{Q}^+\mathbf{w}\right)^{-1}=\frac{1}{A_{\not{i}}},
\end{equation}
in agreement with the limiting result.

By contrast,
\begin{equation}
\mathbf{e}_i^T\mathbf{u}_0=1\neq0,
\end{equation}
so $\mathbf{e}_i\notin\operatorname{range}(\mathbf{Q})$ and the individual parameter $\theta_i$ is not estimable:
\begin{equation}
\mathcal{E}_i=0.
\end{equation}
The matched boundary therefore realizes complete local inaccessibility for node $i$ while retaining finite global information. Complete local privacy of the entire network requires this estimability condition to hold for every node, as discussed in \cref{sec: privacy}.

\begin{figure}[t]
\centering
\includegraphics[width=0.7\linewidth]{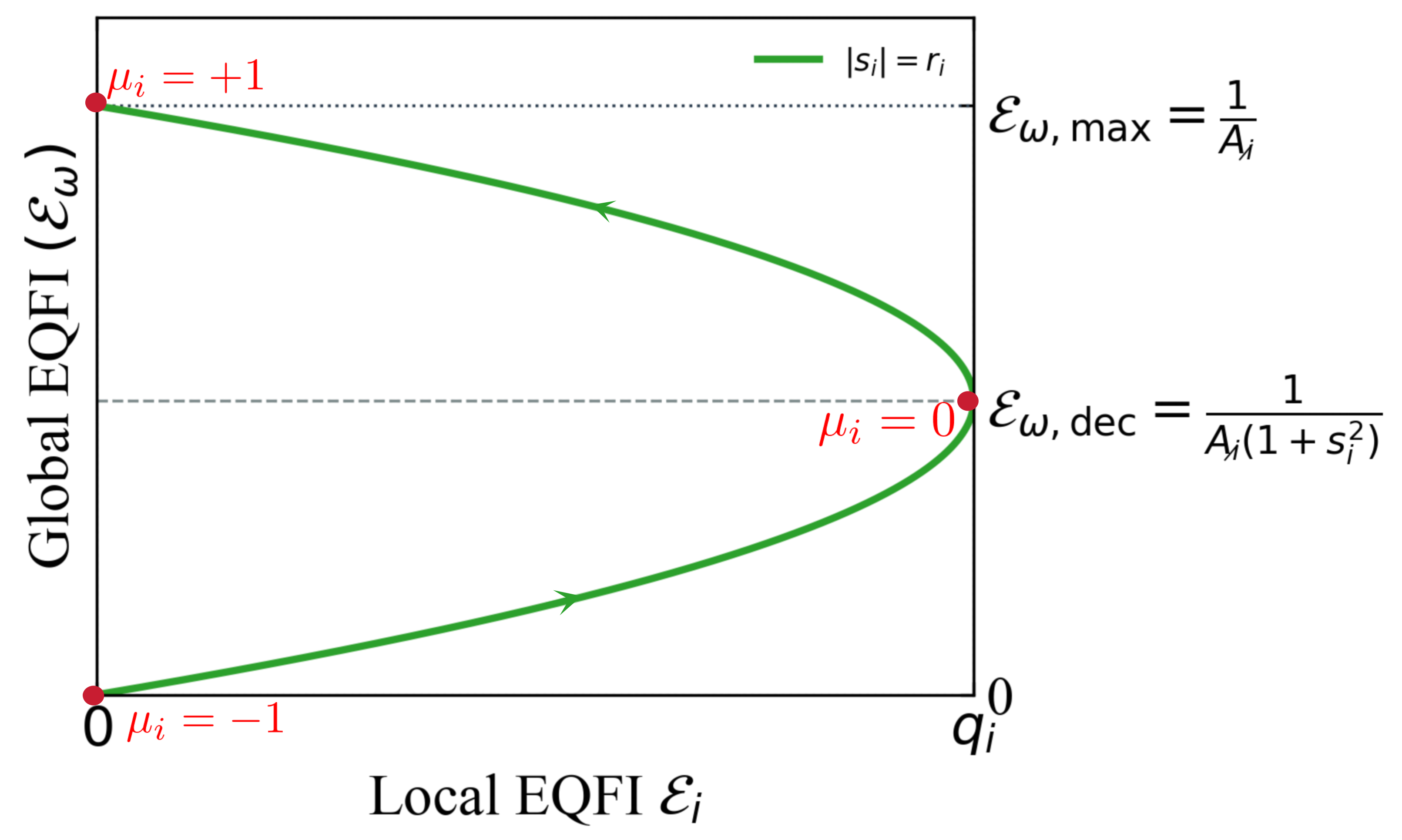}
\caption{\textit{Matched singular boundary.} For $|s_i|=r_i$, the aligned branch terminates at $\mathcal{E}_i=0$ and $\mathcal{E}_w=1/A_{\not{i}}$. The overcorrelated regime is absent, and the global target remains estimable at the singular endpoint.}
\label{fig: parameter-function-2}
\end{figure}

\subsection{Limiting Cases}

We finally consider the limits $s_i=0$ and $r_i=0$, for which the generic stationary-point formula is not applicable.

\textit{Zero node weight, $s_i=0$.---} In this case, $w_i=0$ and the global target is entirely supported on the complementary subnetwork:
\begin{equation}
\theta_w=\mathbf{w}_{\not{i}}^T\bm{\theta}_{\not{i}}.
\end{equation}
\cref{eq:global_map} reduces to
\begin{equation}\label{eq:zero_weight_sm}
\mathcal{E}_w(\mu_i)=\frac{1}{A_{\not{i}}}\frac{1-\mu_i^2}{1-(1-r_i^2)\mu_i^2}.
\end{equation}
For $r_i>0$, $\mathcal{E}_w$ decreases monotonically with $|\mu_i|$ and is maximized at the decoupled point:
\begin{equation}
\mathcal{E}_{w,\max}=\mathcal{E}_w(0)=\frac{1}{A_{\not{i}}}.
\end{equation}
Although node $i$ carries no target weight, correlations with that node introduce an additional nuisance coupling and reduce the global EQFI. When $r_i=1$,
\begin{equation}
\mathcal{E}_w(\mu_i)=\frac{1-\mu_i^2}{A_{\not{i}}}=\frac{\mathcal{E}_i(\mu_i)}{A_{\not{i}}q_i},
\end{equation}
so the local and global EQFIs are proportional along this correlation family.

\textit{Orthogonal overlap, $r_i=0$.---} Here,
\begin{equation}
D_{\not{i}}=\mathbf{w}_{\not{i}}^T\mathbf{Q}_{\not{i}}^{-1}\mathbf{v}_i=0,
\end{equation}
meaning that the complementary target direction is orthogonal to the correlation axis in the metric induced by $\mathbf{Q}_{\not{i}}^{-1}$. For $s_i\neq0$,
\begin{equation}\label{eq:zero_overlap_sm}
\mathcal{E}_w(\mu_i)=\frac{1}{A_{\not{i}}}\left(1+\frac{s_i^2}{1-\mu_i^2}\right)^{-1}.
\end{equation}
This quantity decreases monotonically with $|\mu_i|$, and its maximum is the decoupled value
\begin{equation}
\mathcal{E}_{w,\max}=\frac{1}{A_{\not{i}}(1+s_i^2)}.
\end{equation}
Correlations orthogonal to the target direction cannot enhance the global EQFI.

\textit{Fully decoupled target geometry, $s_i=r_i=0$.---} Combining the two limits gives
\begin{equation}
\mathcal{E}_w(\mu_i)=\frac{1}{A_{\not{i}}},
\end{equation}
independent of $\mu_i$. In this case, node $i$ has zero target weight and its correlation axis is orthogonal to the complementary target direction. Changing the node--subnetwork correlation therefore affects the local EQFI but leaves the global task unchanged.

\begin{figure}[t]
\centering
\includegraphics[width=\linewidth]{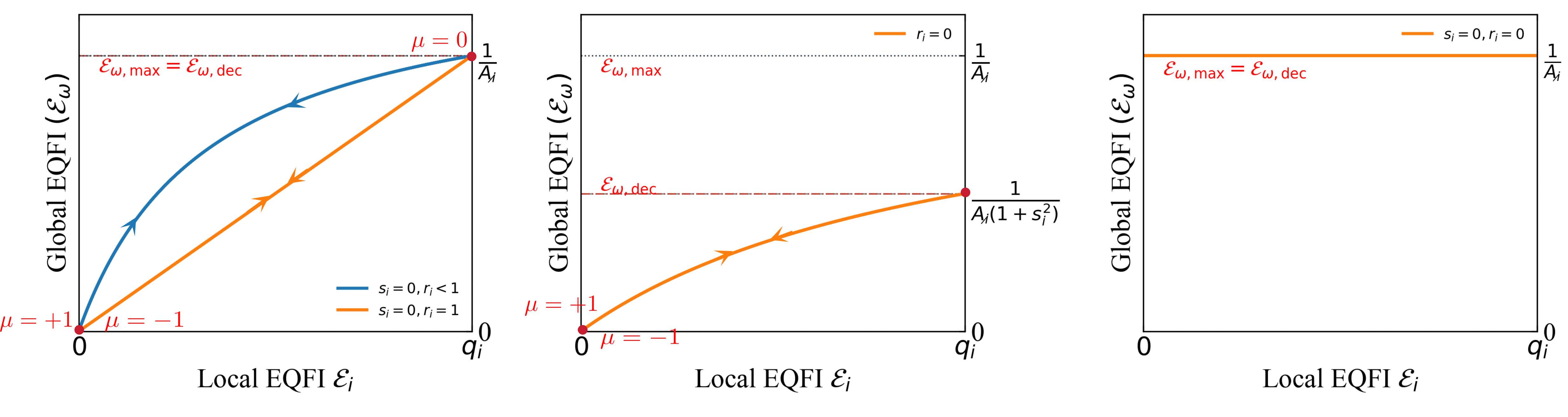}
\caption{\textit{Limiting phase-map geometries.} (a) Zero node weight, $s_i=0$ and $r_i>0$. (b) Orthogonal overlap, $r_i=0$ and $s_i\neq0$. (c) Fully decoupled target geometry, $s_i=r_i=0$.}
\label{fig: parameter-function-3}
\end{figure}

\begin{table}[t]
\centering
\caption{Behavior of the global EQFI for $s_i\ge0$. For $s_i<0$, the aligned branch is obtained by reversing the sign of $\mu_i$.}
\begin{tabular}{cccc}
\toprule
Condition & Behavior of $\mathcal{E}_w$ & Maximizer & Maximum \\
\midrule
$s_i,r_i>0$, $s_i\neq r_i$ & Increases, then decreases & $\mu_i^*$ & \cref{eq:subnetwork_ceiling_sm} or \cref{eq:bottleneck_branch_sm} \\
$s_i=r_i>0$ & Increases on aligned branch & $\mu_i=1$ & $1/A_{\not{i}}$ \\
$s_i=0$, $r_i>0$ & Decreases with $|\mu_i|$ & $\mu_i=0$ & $1/A_{\not{i}}$ \\
$r_i=0$, $s_i>0$ & Decreases with $|\mu_i|$ & $\mu_i=0$ & $1/[A_{\not{i}}(1+s_i^2)]$ \\
$s_i=r_i=0$ & Constant & Any $\mu_i$ & $1/A_{\not{i}}$ \\
\bottomrule
\end{tabular}
\label{tab:special_cases}
\end{table}

The phase map therefore provides a complete classification of the local--global precision relationship for an arbitrary node--subnetwork decomposition. Correlations enhance the global target only when they are aligned with its weight geometry and remain below the optimal strength. Beyond that optimum, additional QFIM correlation suppresses both local and global information, except at the matched singular boundary where local information vanishes while the global target remains estimable.

\section{Recursive Network Design}
\label{sec: recursive}

\subsection{Motivation and Recursive Principle}

We now address the inverse design problem: for fixed local QFI capacities ${q_i}$ and a prescribed target vector $\mathbf{w}$, how should the inter-node correlations be constructed to saturate the bottleneck bound? The node--subnetwork phase map shows that the optimal QFIM is generally nonunique. At the level of the recursive construction, the optimal family is described by one scalar overlap parameter at each of the $N-1$ node-addition steps,
\begin{equation}\label{eq: recursive_design_parameters}
\mathbf{r}=\left(r^{(1)},r^{(2)},\ldots,r^{(N-1)}\right).
\end{equation}
These overlap coordinates characterize the alignment between the correlation introduced at each step and the target direction of the existing subnetwork. Their realizable values depend on the dimension and support of the intermediate QFIMs, and different correlation directions may also produce the same overlap.

The construction follows directly from the phase map. According to the Corollary I in the main text, adding a node to an existing subnetwork preserves the subnetwork ceiling whenever the node-dependent term vanishes, namely when $|s|\le r$ and $\mu=s/r$. We therefore begin with a one-node network that already attains the bottleneck precision and add the remaining nodes one by one, choosing the correlation at each step so that the global EQFI remains unchanged.

We first sort and relabel the nodes according to their intrinsic weighted capacities,
\begin{equation}\label{eq: recursive_capacity_order}
\kappa_1\ge\kappa_2\ge\cdots\ge\kappa_N,\qquad \kappa_i\equiv\frac{q_i}{w_i^2},
\end{equation}
where $\kappa_i=+\infty$ when $w_i=0$. Node $N$ is then a bottleneck node and fixes the optimal precision,
\begin{equation}\label{eq: recursive_bottleneck}
\E_{w,\max}=\kappa_N=\frac{q_N}{w_N^2}=\min_i\kappa_i.
\end{equation}

Let $\mathbf{Q}^{(k)}$ denote the QFIM of the network containing nodes $k,k+1,\ldots,N$, and let $\mathbf{w}^{(k)}=(w_k,w_{k+1},\ldots,w_N)^T$ denote its target-weight vector. At the $k$th recursive step, node $k$ is added to the already constructed network $\mathbf{Q}^{(k+1)}$:
\begin{equation}\label{eq: recursive_block}
\mathbf{Q}^{(k)}=\begin{pmatrix}q_k&(\mathbf{c}^{(k)})^T\\ \mathbf{c}^{(k)}&\mathbf{Q}^{(k+1)}\end{pmatrix},\qquad \mathbf{w}^{(k)}=\begin{pmatrix}w_k\\ \mathbf{w}^{(k+1)}\end{pmatrix}.
\end{equation}
Here, $\mathbf{c}^{(k)}$ is the correlation vector introduced when constructing the $k$-to-$N$ network. For any real vector $\mathbf{x}$, $||\mathbf{x}||\equiv\sqrt{\mathbf{x}^T\mathbf{x}}$ denotes the Euclidean norm.

To distinguish the recursive construction from the arbitrary node--subnetwork decomposition considered previously, we use $s^{(k)}$, $r^{(k)}$, and $\mu^{(k)}$ for the phase-map parameters associated with the formation of $\mathbf{Q}^{(k)}$. By contrast, $A^{(k+1)}$, $B^{(k+1)}$, and $D^{(k+1)}$ characterize the existing subnetwork $\mathbf{Q}^{(k+1)}$.

The recursion preserves two properties:
\begin{equation}\label{eq: recursive_invariant}
\mathbf{w}^{(k)}\in\operatorname{range}(\mathbf{Q}^{(k)}),\qquad A^{(k)}\equiv(\mathbf{w}^{(k)})^T(\mathbf{Q}^{(k)})^+\mathbf{w}^{(k)}=\frac{1}{\kappa_N}.
\end{equation}
The first condition ensures that the target remains estimable even when the QFIM is singular, while the second gives $\E_w^{(k)}=1/A^{(k)}=\kappa_N$.

The construction is initialized with the bottleneck node alone. Since $\mathbf{Q}^{(N)}=q_N$ and $\mathbf{w}^{(N)}=w_N$, one immediately has $A^{(N)}=w_N^2/q_N=1/\kappa_N$. The remaining nodes are then incorporated successively according to
\begin{equation}
\mathbf{Q}^{(N)}\xrightarrow{\mathbf{c}^{(N-1)}}\mathbf{Q}^{(N-1)}\xrightarrow{\mathbf{c}^{(N-2)}}\cdots\xrightarrow{\mathbf{c}^{(1)}}\mathbf{Q}^{(1)}.
\end{equation}

\subsection{General Optimal Family with Arbitrary $r^{(k)}$}

Assume that the subnetwork $\mathbf{Q}^{(k+1)}$ has already been constructed and satisfies the recursive invariant in \cref{eq: recursive_invariant}. We incorporate node $k$ through
\begin{equation}\label{eq: recursive_block_general}
\mathbf{Q}^{(k)}=\begin{pmatrix}q_k&(\mathbf{c}^{(k)})^T\\ \mathbf{c}^{(k)}&\mathbf{Q}^{(k+1)}\end{pmatrix},\qquad \mathbf{c}^{(k)}=c^{(k)}\mathbf{v}^{(k+1)},
\end{equation}
where $c^{(k)}\in\mathbb{R}$ is a signed amplitude and $\mathbf{v}^{(k+1)}\in\operatorname{range}(\mathbf{Q}^{(k+1)})$ is a Euclidean unit vector. We orient $\mathbf{v}^{(k+1)}$ such that $D^{(k+1)}\ge0$, so the sign of the correlation is carried entirely by $c^{(k)}$.

The geometry of the existing subnetwork is characterized by
\begin{equation}\label{eq: recursive_ABD}
A^{(k+1)}\equiv(\mathbf{w}^{(k+1)})^T(\mathbf{Q}^{(k+1)})^+\mathbf{w}^{(k+1)},\qquad B^{(k+1)}\equiv(\mathbf{v}^{(k+1)})^T(\mathbf{Q}^{(k+1)})^+\mathbf{v}^{(k+1)},\qquad D^{(k+1)}\equiv(\mathbf{w}^{(k+1)})^T(\mathbf{Q}^{(k+1)})^+\mathbf{v}^{(k+1)}.
\end{equation}
Correspondingly, the phase-map parameters associated with the enlarged $k$-to-$N$ network are
\begin{equation}\label{eq: recursive_phase_parameters}
s^{(k)}\equiv\frac{w_k/\sqrt{q_k}}{\sqrt{A^{(k+1)}}},\qquad r^{(k)}\equiv\frac{D^{(k+1)}}{\sqrt{A^{(k+1)}B^{(k+1)}}},\qquad \mu^{(k)}\equiv c^{(k)}\sqrt{\frac{B^{(k+1)}}{q_k}}.
\end{equation}

We first show how to construct a correlation direction with any prescribed overlap $r^{(k)}$. Let $\mathbf{M}^{(k+1)}\equiv(\mathbf{Q}^{(k+1)})^+$. When $\operatorname{rank}(\mathbf{Q}^{(k+1)})\ge2$, choose a vector $\mathbf{u}^{(k+1)}\in\operatorname{range}(\mathbf{Q}^{(k+1)})$ that is orthogonal to $\mathbf{w}^{(k+1)}$ in the metric induced by $\mathbf{M}^{(k+1)}$ and normalized in the same metric:
\begin{equation}
(\mathbf{w}^{(k+1)})^T\mathbf{M}^{(k+1)}\mathbf{u}^{(k+1)}=0,\qquad (\mathbf{u}^{(k+1)})^T\mathbf{M}^{(k+1)}\mathbf{u}^{(k+1)}=1.
\end{equation}

Then define
\begin{equation}\label{eq: recursive_direction_general}
\widetilde{\mathbf{v}}^{(k+1)}=r^{(k)}\frac{\mathbf{w}^{(k+1)}}{\sqrt{A^{(k+1)}}}+\sqrt{1-[r^{(k)}]^2}\mathbf{u}^{(k+1)},\qquad \mathbf{v}^{(k+1)}=\frac{\widetilde{\mathbf{v}}^{(k+1)}}{|\widetilde{\mathbf{v}}^{(k+1)}|}.
\end{equation}
The Euclidean normalization does not change the metric angle. Indeed,
\begin{equation}\label{eq: recursive_direction_verification}
B^{(k+1)}=\frac{1}{|\widetilde{\mathbf{v}}^{(k+1)}|^2},\qquad D^{(k+1)}=\frac{r^{(k)}\sqrt{A^{(k+1)}}}{|\widetilde{\mathbf{v}}^{(k+1)}|},
\end{equation}
and hence $D^{(k+1)}/\sqrt{A^{(k+1)}B^{(k+1)}}=r^{(k)}$, as required. Thus, every overlap in $[0,1]$ can be realized whenever the support of $\mathbf{Q}^{(k+1)}$ has dimension at least two. 

If $\operatorname{rank}(\mathbf{Q}^{(k+1)})=1$, every nonzero direction in its range is parallel to $\mathbf{w}^{(k+1)}$, and only $r^{(k)}=1$ is available.

The recursive invariant gives $A^{(k+1)}=1/\kappa_N$, so
\begin{equation}
s^{(k)}=w_k\sqrt{\frac{\kappa_N}{q_k}},\qquad |s^{(k)}|=\sqrt{\frac{\kappa_N}{\kappa_k}}.
\end{equation}
The phase-map ceiling is therefore attainable whenever the prescribed overlap satisfies
\begin{equation}\label{eq: recursive_r_admissible}
r^{(k)}\in\left[\sqrt{\frac{\kappa_N}{\kappa_k}},1\right].
\end{equation}
For $w_k\neq0$, this interval has a strictly positive lower endpoint, and hence $D^{(k+1)}>0$. We then choose the signed amplitude $c^{(k)}=w_k/D^{(k+1)}$, or equivalently
\begin{equation}\label{eq: recursive_correlation_general}
\mathbf{c}^{(k)}=\frac{w_k}{D^{(k+1)}}\mathbf{v}^{(k+1)}=\frac{w_k}{r^{(k)}}\sqrt{\frac{\kappa_N}{B^{(k+1)}}}\mathbf{v}^{(k+1)}.
\end{equation}
Because $D^{(k+1)}\ge0$, the sign of $c^{(k)}$ is the sign of $w_k$. This construction gives $\mu^{(k)}=s^{(k)}/r^{(k)}$ and enforces the cancellation condition
\begin{equation}\label{eq: recursive_cancellation}
(\mathbf{c}^{(k)})^T(\mathbf{Q}^{(k+1)})^+\mathbf{w}^{(k+1)}=w_k.
\end{equation}
Equation~\eqref{eq: recursive_cancellation} is the matrix form of the phase-map ceiling condition: the correlation-induced contribution from the existing subnetwork exactly matches the target weight of the newly added node. If $w_k=0$, the ceiling is attained simply by choosing $\mathbf{c}^{(k)}=\mathbf{0}$.

The same admissibility condition also guarantees that the enlarged QFIM remains positive semidefinite. Since the constructed correlation vector lies in $\operatorname{range}(\mathbf{Q}^{(k+1)})$, the generalized Schur complement is
\begin{equation}\label{eq: recursive_schur_general}
\E_k^{(k)}=q_k-(\mathbf{c}^{(k)})^T(\mathbf{Q}^{(k+1)})^+\mathbf{c}^{(k)}=q_k\left(1-\frac{\kappa_N}{\kappa_k[r^{(k)}]^2}\right)\ge0.
\end{equation}
Thus, \cref{eq: recursive_r_admissible} is precisely the positivity condition for the newly added node.

Finally, we verify that the global precision is preserved. Let $\mathbf{z}^{(k+1)}=(\mathbf{Q}^{(k+1)})^+\mathbf{w}^{(k+1)}$. The induction assumption implies $\mathbf{Q}^{(k+1)}\mathbf{z}^{(k+1)}=\mathbf{w}^{(k+1)}$, while \cref{eq: recursive_cancellation} gives $(\mathbf{c}^{(k)})^T\mathbf{z}^{(k+1)}=w_k$. Consequently,
\begin{equation}\label{eq: recursive_preservation}
\mathbf{Q}^{(k)}\begin{pmatrix}0\\ \mathbf{z}^{(k+1)}\end{pmatrix}=\mathbf{w}^{(k)},\qquad A^{(k)}=A^{(k+1)}=\frac{1}{\kappa_N}.
\end{equation}
The target therefore remains estimable and its global EQFI is preserved at $\E_w^{(k)}=\kappa_N$. This completes the inductive step.

The sequence $r^{(1)},r^{(2)},\ldots,r^{(N-1)}$ consequently labels a family of bottleneck-saturating recursive designs. When the intermediate subnetwork has support dimension at least two, every value in the admissible interval of \cref{eq: recursive_r_admissible} can be realized through \cref{eq: recursive_direction_general}. Different choices of the auxiliary vector $\mathbf{u}^{(k+1)}$ can nevertheless generate distinct correlation directions with the same overlap, so the scalar sequence $r^{(k)}$ does not uniquely specify the complete QFIM.

At the first recursive step, the existing subnetwork contains only the bottleneck node and is one-dimensional; therefore, $r^{(N-1)}=1$ necessarily. At the lower endpoint $r^{(k)}=\sqrt{\kappa_N/\kappa_k}$, the generalized Schur complement vanishes and node $k$ becomes locally inestimable, while the global EQFI remains equal to $\kappa_N$. For larger overlaps, the newly added node retains finite local estimability.

\subsection{Canonical Maximally Aligned Construction: $r^{(k)}=1$}

We now select a canonical representative of the optimal family by setting
\begin{equation}\label{eq: recursive_all_aligned}
r^{(k)}=1,\qquad k=N-1,N-2,\ldots,1.
\end{equation}
Thus, at every recursive step, the new correlation direction is chosen to be maximally aligned with the target direction of the existing subnetwork:
\begin{equation}\label{eq: recursive_direction_aligned}
    \mathbf{v}^{(k+1)}=\frac{\mathbf{w}^{(k+1)}}{\|\mathbf{w}^{(k+1)}\|}.
\end{equation}
For this direction,
\begin{equation}
    B^{(k+1)}=\frac{A^{(k+1)}}{\|\mathbf{w}^{(k+1)}\|^2},\qquad
    D^{(k+1)}=\frac{A^{(k+1)}}{\|\mathbf{w}^{(k+1)}\|},
\end{equation}
which verifies $r^{(k)}=1$. These relations remain valid for a singular subnetwork upon replacing the inverse of its QFIM by the Moore--Penrose pseudoinverse.

Using $A^{(k+1)}=1/\kappa_N$, the general prescription in \cref{eq: recursive_correlation_general} simplifies to
\begin{equation}\label{eq: recursive_correlation_aligned}
    \mathbf{c}^{(k)}=\kappa_N w_k\mathbf{w}^{(k+1)}.
\end{equation}
The corresponding phase-map strength and local EQFI are
\begin{equation}
    \mu^{(k)}=s^{(k)}=w_k\sqrt{\frac{\kappa_N}{q_k}}, \qquad
    \E_k^{(k)}=q_k-\kappa_Nw_k^2.
\end{equation}
Since the general correlation penalty scales as
$\kappa_Nw_k^2/[r^{(k)}]^2$, the choice $r^{(k)}=1$ minimizes the correlation-induced reduction of the local EQFI required to reach the subnetwork ceiling.

Iterating \cref{eq: recursive_correlation_aligned} fixes every off-diagonal element of the final QFIM to
$Q_{ij}=\kappa_Nw_iw_j$. The complete network therefore takes the closed form
\begin{equation}\label{eq: recursive_closed_form}
    \mathbf{Q}=\kappa_N\mathbf{w}\mathbf{w}^T+\operatorname{diag}\left(q_1-\kappa_Nw_1^2,\ldots, q_N-\kappa_Nw_N^2\right).
\end{equation}
Every diagonal correction is nonnegative because
$\kappa_i\geq\kappa_N$, so the resulting QFIM is manifestly positive semidefinite.

The closed form in \cref{eq: recursive_closed_form} also includes networks with a degenerate bottleneck. Let $\mathcal B=\left\{i:\kappa_i=\kappa_N\right\}$ denote the set of bottleneck nodes. If $|\mathcal B|=1$, the rank-one term restores the direction missing from the diagonal part, and $\mathbf Q$ is positive definite. If $|\mathcal B|\geq2$, however, the QFIM is singular, with
\begin{equation}\label{eq: canonical_kernel}
    \ker(\mathbf Q)=\left\{\mathbf{x}:x_i=0\ \text{for}\ i\notin\mathcal B,\quad\sum_{i\in\mathcal B}w_ix_i=0\right\},
\end{equation}
and hence
\begin{equation}
    \operatorname{rank}(\mathbf Q)=N-|\mathcal B|+1.
\end{equation}
Nevertheless, the target direction remains estimable. Indeed, for any $j\in\mathcal B$, $\mathbf Q\frac{\mathbf e_j}{\kappa_Nw_j}= \mathbf w$, which proves that $\mathbf w\in\operatorname{range}(\mathbf Q)$. Because any two solutions of $\mathbf Q\mathbf z=\mathbf w$ differ by a vector in $\ker(\mathbf Q)$, which is orthogonal to $\mathbf w$, it gives
\begin{equation}
    \mathbf w^T\mathbf Q^+\mathbf w=\frac{1}{\kappa_N},\qquad
    \E_w=\left(\mathbf w^T\mathbf Q^+\mathbf w\right)^{-1}=\kappa_N=\min_i\kappa_i.
\end{equation}
Thus, bottleneck saturation persists even when the optimal QFIM is singular.

When an intermediate subnetwork contains more than one bottleneck node, its QFIM is likewise singular. The recursive construction then proceeds by interpreting all inverse-based quantities $A^{(k)}$, $B^{(k)}$, and $D^{(k)}$ using the corresponding Moore--Penrose pseudoinverse. This extension is well defined because
\begin{equation}
    \mathbf c^{(k)}=\kappa_Nw_k\mathbf w^{(k+1)}\in\operatorname{range}\!\left(\mathbf Q^{(k+1)}\right),
\end{equation}
so the range condition required by the generalized Schur complement is automatically satisfied. The aligned relations above and the invariant
$A^{(k)}=1/\kappa_N$ therefore remain valid at every recursive step.

The recursive phase-map construction consequently identifies a family of optimal QFIMs under fixed local capacities, including both regular and singular optima. The overlaps $r^{(k)}$ control the distribution and strength of the required correlations, whereas the choice $r^{(k)}=1$ at every level selects the maximally aligned representative and yields the compact closed-form QFIM in \cref{eq: recursive_closed_form}.

\section{Permutation-Invariant Networks}
\label{sec:permutation_invariant}

\subsection{Exact Global EQFI and Sensitivity to Weight Asymmetry}

We consider an analytically solvable class of quantum sensor networks whose QFIM is invariant under permutations of the nodes. Such a structure arises naturally for permutation-symmetric multipartite probes, including GHZ, W, and collectively spin-squeezed states with identical local generators. The QFIM takes the form
\begin{equation}\label{eq:permutation_qfim}
\mathbf Q = \begin{pmatrix} q & c & c & \cdots & c \\ c & q & c & \cdots & c \\ c & c & q & \cdots & c \\ \vdots & \vdots & \vdots & \ddots & \vdots \\ c & c & c & \cdots & q \end{pmatrix} = (q-c)\mathbf I_N+c\mathbf J_N,
\end{equation}
where $\mathbf I_N$ is the identity matrix and $\mathbf J_N$ is the all-one matrix. The diagonal element $q$ is the single-parameter local QFI of each node, whereas $c$ quantifies the uniform pairwise QFIM correlation. Introducing $\gamma=c/q$, positivity of $\mathbf Q$ requires $-\frac{1}{N-1}\le\gamma\le1$.

To connect this model with the general node--subnetwork phase map, we isolate node $1$ and write $m=N-1$. The complementary subnetwork and the node--subnetwork correlation vector are
\begin{equation}
\mathbf Q_{\not 1}=(q-c)\mathbf I_m+c\mathbf J_m,\qquad \mathbf c_1=c\mathbf 1_m,
\end{equation}
where $\mathbf 1_m$ is the $m$-dimensional all-one vector. For $-1/(N-1)<\gamma<1$, the inverse of the complementary QFIM is
\begin{equation}\label{eq:permutation_subinverse}
\mathbf Q_{\not 1}^{-1}=\frac{1}{q-c}\mathbf I_m-\frac{c}{(q-c)[q+(N-2)c]}\mathbf J_m.
\end{equation}

Because $\mathbf c_1$ points along the uniform direction, we choose $\mathbf v=\frac{\mathbf 1_m}{\sqrt m}$ and $\mathbf c_1=\sqrt m\,c\,\mathbf v$. 
Defining 
\begin{equation}
W_{\not 1}\equiv\sum_{i=2}^Nw_i,\qquad S_{\not 1}\equiv\sum_{i=2}^Nw_i^2,
\end{equation}
the geometric quantities entering the phase map become
\begin{equation}\label{eq:permutation_ABD}
\begin{aligned}
B_{\not 1}&=\mathbf v^T\mathbf Q_{\not 1}^{-1}\mathbf v=\frac{1}{q+(N-2)c},\\
A_{\not 1}&=\mathbf w_{\not 1}^T\mathbf Q_{\not 1}^{-1}\mathbf w_{\not 1}=\frac{S_{\not 1}}{q-c}-\frac{cW_{\not 1}^2}{(q-c)[q+(N-2)c]},\\
D_{\not 1}&=\mathbf w_{\not 1}^T\mathbf Q_{\not 1}^{-1}\mathbf v=\frac{W_{\not 1}}{\sqrt{N-1}[q+(N-2)c]}.
\end{aligned}
\end{equation}
Accordingly,
\begin{equation}\label{eq:permutation_phase_parameters}
r=\frac{D_{\not 1}}{\sqrt{A_{\not 1}B_{\not 1}}},\qquad
s=\frac{w_1}{\sqrt{qA_{\not 1}}},\qquad
\mu=\frac{\sqrt{N-1}\,c}{\sqrt{q[q+(N-2)c]}}.
\end{equation}
In terms of the normalized correlation $\gamma$, the local information-loss parameter is
\begin{equation}\label{eq:permutation_mu}
\mu^2=\frac{(N-1)\gamma^2}{1+(N-2)\gamma}.
\end{equation}
The general phase map in \cref{eq:local_map,eq:global_map} therefore applies directly:
\begin{equation}
\mathcal E_1=q(1-\mu^2),\qquad
\mathcal E_w=\frac{1}{A_{\not 1}}\left[1+\frac{(s-r\mu)^2}{1-\mu^2}\right]^{-1}.
\end{equation}

We next derive the global EQFI in closed form. The inverse of the full QFIM is
\begin{equation}\label{eq:permutation_full_inverse}
\mathbf Q^{-1}=\frac{1}{q-c}\mathbf I_N-\frac{c}{(q-c)[q+(N-1)c]}\mathbf J_N.
\end{equation}
We focus on normalized nonnegative weights satisfying $\sum_iw_i=1$ and define their inverse participation ratio
\begin{equation}
S\equiv\sum_{i=1}^Nw_i^2,\qquad \frac{1}{N}\le S\le1.
\end{equation}
Using $\mathbf w^T\mathbf J_N\mathbf w=(\sum_iw_i)^2=1$, we obtain
\begin{equation}
\mathbf w^T\mathbf Q^{-1}\mathbf w=\frac{S}{q-c}-\frac{c}{(q-c)[q+(N-1)c]}.
\end{equation}
The exact global EQFI is consequently
\begin{equation}\label{eq:permutation_global_eqfi}
\mathcal E_w=q\,\frac{(1-\gamma)[1+(N-1)\gamma]}{S[1+(N-1)\gamma]-\gamma}.
\end{equation}

The correlation-free case $\gamma=0$ provides the SQL reference
\begin{equation}\label{eq:permutation_sql}
\mathcal E_{w,\mathrm{SQL}}=\frac{q}{S}.
\end{equation}
Because the QFIM itself is permutation invariant, any mismatch between its symmetry and the sensing task is encoded entirely in the weight profile. We quantify the uniformity of this profile by
\begin{equation}\label{eq:weight_uniformity}
\alpha_S\equiv\frac{1-S}{S}=\frac{1}{S}-1,\qquad
\widetilde{\alpha}_S\equiv\frac{\alpha_S}{N-1}=\frac{1-S}{S(N-1)}.
\end{equation}
The normalized factor satisfies $0\le\widetilde{\alpha}_S\le1$. The limit $\widetilde{\alpha}_S=0$ corresponds to a target concentrated on a single node, whereas $\widetilde{\alpha}_S=1$ corresponds to the uniform target $w_i=1/N$.

Dividing \cref{eq:permutation_global_eqfi} by \cref{eq:permutation_sql} gives
\begin{equation}\label{eq:permutation_eqfi_ratio}
\widetilde{\E}_w \equiv \frac{\E_w}{\E_{w,\mathrm{SQL}}}= 1+(N-1)\gamma \frac{\widetilde{\alpha}_S-\gamma}{1+\gamma R},
\end{equation}
with $R=(N-1)(1-\widetilde{\alpha}_S)-1$. 

Optimizing this expression over the physically allowed correlation strength yields
\begin{equation}\label{eq:permutation_optimal_gamma}
\gamma_\star=\frac{\widetilde{\alpha}_S}{1+T_S},\qquad
T_S\equiv\sqrt{(1-\widetilde{\alpha}_S)[1+(N-1)\widetilde{\alpha}_S]}.
\end{equation}
At this optimum,
\begin{equation}\label{eq:permutation_max_ratio}
\frac{\mathcal E_{w,\max}}{\mathcal E_{w,\mathrm{SQL}}}
=1+(N-1)\gamma_\star^2
=1+(N-1)\frac{\widetilde{\alpha}_S^2}{(1+T_S)^2}.
\end{equation}

For the uniform target, $\widetilde{\alpha}_S=1$, so $T_S=0$ and $\gamma_\star=1$. Therefore,
\begin{equation}
\frac{\mathcal E_{w,\max}}{\mathcal E_{w,\mathrm{SQL}}}=N,\qquad
\mathcal E_{w,\mathrm{SQL}}=qN,\qquad
\mathcal E_{w,\max}=qN^2.
\end{equation}
The rank-one limit $\gamma=1$ thus produces Heisenberg scaling when its unique estimable direction is exactly matched to the uniform target.

The fragility of this enhancement can be quantified by writing
\begin{equation}
\widetilde{\alpha}_S=1-\delta.
\end{equation}
Then
\begin{equation}
T_S=\sqrt{\delta[N-(N-1)\delta]}.
\end{equation}
If $\delta=\Theta(1)$, \cref{eq:permutation_max_ratio} gives
\begin{equation}
\frac{\mathcal E_{w,\max}}{\mathcal E_{w,\mathrm{SQL}}}\sim\frac{1}{\delta}=O(1).
\end{equation}
Since $\mathcal E_{w,\mathrm{SQL}}=O(N)$ in this regime, the optimized EQFI retains only SQL scaling, $\mathcal E_{w,\max}=O(N)$.

By contrast, if $\delta=O(1/N)$, then
\begin{equation}
\frac{\mathcal E_{w,\max}}{\mathcal E_{w,\mathrm{SQL}}}
\sim\frac{N}{(1+\sqrt{N\delta})^2},
\end{equation}
and hence
\begin{equation}
\mathcal E_{w,\max}\sim\frac{qN^2}{(1+\sqrt{N\delta})^2}.
\end{equation}
An $O(1/N)$ symmetry mismatch therefore changes only the asymptotic prefactor, whereas an $O(1)$ mismatch destroys the quadratic scaling. Preserving Heisenberg scaling in a permutation-invariant network consequently requires the target weights to approach uniformity at least as rapidly as $1-\widetilde{\alpha}_S=O(1/N)$.

\subsection{Comparison with the Projected QFI}

Several studies characterize distributed sensing through the QFI associated with a prescribed one-dimensional parameter direction~\cite{hassani2025privacy,farokhi2026precision}. For the target $\theta_w=\mathbf w^T\bm{\theta}$, this projected QFI is
\begin{equation}\label{eq:projected_qfi_permutation}
F_w\equiv\frac{\mathbf w^T\mathbf Q\mathbf w}{(\mathbf w^T\mathbf w)^2}.
\end{equation}
It corresponds to the one-parameter embedding $\bm{\theta}=\theta_w\mathbf w/(\mathbf w^T\mathbf w)$, for which all parameter directions orthogonal to $\mathbf w$ are assumed known or absent. The associated generator is $(\mathbf w^T\mathbf w)^{-1}\sum_iw_iH_i$.

The EQFI instead describes the genuinely multiparameter setting in which the orthogonal parameter combinations remain unknown and act as nuisance parameters:
\begin{equation}
\mathcal E_w=\left(\mathbf w^T\mathbf Q^{-1}\mathbf w\right)^{-1}.
\end{equation}
These two quantities satisfy $F_w\ge\mathcal E_w$, with equality precisely when $\mathbf w$ is an eigenvector of $\mathbf Q$.

For the permutation-invariant QFIM in \cref{eq:permutation_qfim},
\begin{equation}
\mathbf w^T\mathbf Q\mathbf w=q[(1-\gamma)S+\gamma],
\end{equation}
and therefore
\begin{equation}\label{eq:permutation_projected_qfi}
F_w=q\,\frac{(1-\gamma)S+\gamma}{S^2}.
\end{equation}
Relative to the SQL reference,
\begin{equation}\label{eq:permutation_projected_ratio}
\frac{F_w}{\mathcal E_{w,\mathrm{SQL}}}
=1+\gamma\left(\frac{1}{S}-1\right)
=1+\gamma\alpha_S.
\end{equation}
Thus, for nonnegative correlations, the projected QFI increases linearly with $\gamma$.

By contrast, the exact EQFI obeys the nonlinear relation
\begin{equation}
\frac{\mathcal E_w}{\mathcal E_{w,\mathrm{SQL}}}
=\frac{(1-\gamma)[1+(N-1)\gamma]}{1+\gamma(N-2-\alpha_S)},
\end{equation}
which is generally nonmonotonic. Its maximum occurs at the finite correlation strength $\gamma_\star$ in \cref{eq:permutation_optimal_gamma}, except for the perfectly uniform target, for which $\gamma_\star=1$.

For uniform weights, $S=1/N$ and $\alpha_S=N-1$. The vector $\mathbf w$ is then the symmetric eigenvector of $\mathbf Q$, and
\begin{equation}
F_w=\mathcal E_w=qN[1+(N-1)\gamma].
\end{equation}
The projected single-parameter model and the nuisance-aware multiparameter model therefore coincide in this perfectly aligned case.

For every nonuniform normalized weight profile, $S>1/N$, the distinction becomes sharp in the rank-one limit $\gamma\to1$. The projected QFI remains finite and reaches
\begin{equation}
\lim_{\gamma\to1}\frac{F_w}{\mathcal E_{w,\mathrm{SQL}}}=\frac{1}{S},
\end{equation}
whereas the exact EQFI vanishes:
\begin{equation}
\lim_{\gamma\to1}\frac{\mathcal E_w}{\mathcal E_{w,\mathrm{SQL}}}=0.
\end{equation}
At $\gamma=1$, the QFIM is $\mathbf Q=q\mathbf J_N$ and has range $\operatorname{span}(\mathbf 1_N)$. A nonuniform target vector does not belong to this range and is therefore unestimable, despite its nonzero projected QFI.

The local--global phase map provides the same interpretation. From \cref{eq:permutation_mu}, $\gamma\to1$ implies $|\mu|\to1$ and hence $\mathcal E_1\to0$. For a nonuniform target, the singular correlation direction is not globally matched, so the global EQFI collapses as well. The projected QFI does not register this collapse because it removes the nuisance directions before evaluating the information.

Consequently, $F_w$ remains useful for a controlled one-parameter sensing model and for characterizing information along prescribed directions. It fails to, however, diagnose the nonmonotonic local--global trade-off, identify the onset of the overcorrelated regime, or determine the singular matching condition required to preserve global estimability. These features depend essentially on the inverse-QFIM geometry captured by the EQFI.

\section{Intrinsic Local and Functional Privacy in Quantum Sensor Networks}
\label{sec: privacy}

Distributed quantum sensing estimates a prescribed collective feature,
$\theta_w=\mathbf w^T\bm{\theta}$, without necessarily reconstructing the complete field profile $\bm{\theta}$. In many applications, however, the individual parameters may contain sensitive information. This motivates the question of whether the desired global quantity can remain estimable while every individual parameter is inaccessible. Throughout this section, privacy refers to QFIM-level non-identifiability and does not, by itself, constitute a cryptographic security guarantee.

\subsection{Intrinsic Local Privacy}

We define \textit{intrinsic local privacy} by
\begin{equation}\label{eq:local_privacy_definition}
    \E_i=0\quad\forall i,\qquad \E_w>0.
\end{equation}
The first condition means that no individual parameter $\theta_i$ admits a locally unbiased estimator with finite variance when all remaining parameters are unknown, whereas the second ensures that the prescribed sensing task remains operational.

For a singular QFIM, a scalar target $\theta_{\mathbf a}=\mathbf a^T\bm{\theta}$ is locally estimable if and only if $\mathbf a\in\operatorname{range}(\mathbf Q)$. Its EQFI is therefore
\begin{equation}\label{eq:singular_eqfi_privacy_sm}
\E_{\mathbf a}=
\begin{cases}
[\mathbf a^T\mathbf Q^+\mathbf a]^{-1},
& \mathbf a\in\operatorname{range}(\mathbf Q)\\
0, & \mathbf a\notin\operatorname{range}(\mathbf Q).
\end{cases}
\end{equation}
Applying this criterion to $\mathbf a=\mathbf e_i$ and $\mathbf a=\mathbf w$ gives
\begin{equation}\label{eq:privacy_range_sm}
\E_i=0\ \forall i,\quad \E_w>0
\quad\Longleftrightarrow\quad
\mathbf e_i\notin\operatorname{range}(\mathbf Q)\ \forall i,\quad
\mathbf w\in\operatorname{range}(\mathbf Q).
\end{equation}
Thus, every local coordinate direction lies outside the estimable subspace, while the global target direction remains inside it.

Equation~\eqref{eq:privacy_range_sm} also constrains the QFIM nullity
$r_{\mathrm K}\equiv\dim\ker(\mathbf Q)$. Intrinsic local privacy requires $\mathbf Q$ to be singular, because a full-rank QFIM has
$\operatorname{range}(\mathbf Q)=\mathbb R^N$ and therefore contains every $\mathbf e_i$. At the same time, $\E_w>0$ requires $\mathbf Q\neq0$. Consequently,
\begin{equation}\label{eq:privacy_nullity_range_sm}
1\le r_{\mathrm K}\le N-1,
\qquad
1\le\operatorname{rank}(\mathbf Q)=N-r_{\mathrm K}\le N-1.
\end{equation}
The nullity alone is not sufficient for privacy: the kernel must also be oriented such that every $\mathbf e_i$ has a nonzero component in $\ker(\mathbf Q)$, while $\mathbf w\perp\ker(\mathbf Q)$.

By the phase-map characterization, intrinsic local privacy corresponds to the simultaneous matched-singular conditions
\begin{equation}\label{eq:privacy_phase_sm}
    |\mu_i|=1,\qquad s_i=\mu_i r_i,\qquad\forall i.
\end{equation}

When $\mathbf Q_{\not i}$ is itself singular, as may occur for $r_{\mathrm K}>1$, the inverse-based phase-map variables are understood through the canonical regularization $\mathbf Q_\epsilon=\mathbf Q+\epsilon\mathbf I_N$ and let $\epsilon\to 0$. For nondegenerate participating nodes, the phase-map statement then means
\begin{equation}\label{eq:privacy_phase_limit_sm}
    |\mu_i(\epsilon)|\longrightarrow1,\qquad
    s_i(\epsilon)-\mu_i(\epsilon)r_i(\epsilon)\longrightarrow0.
\end{equation}
More precisely, we have:
\begin{equation}
    \frac{[s_i(\epsilon)-\mu_i(\epsilon)r_i(\epsilon)]^2} {1-\mu_i^2(\epsilon)} \longrightarrow0.
\end{equation}
The global EQFI gives: 
\begin{equation}
    \E_w(\epsilon)= \frac{1}{A_{\not i}(\epsilon)}\left[1+\frac{[s_i(\epsilon)-\mu_i(\epsilon)r_i(\epsilon)]^2}{1-\mu_i(\epsilon)^2}\right]^{-1}\to \frac{1}{A_{\not i}}.
\end{equation}
It shows explicitly how the local information vanishes without causing the global EQFI to collapse. The equalities in \cref{eq:privacy_phase_sm} are therefore the singular-boundary shorthand for these regularized limits.

\subsection{Functional Privacy}

A stronger notion, \textit{functional privacy}, requires that the sensing model reveal information only about the authorized function $\theta_w$~\cite{shettell2022private,hassani2025privacy,namkung2026universal,
farokhi2026precision}:
\begin{equation}\label{eq:functional_privacy_definition}
\E_w>0,\qquad \E_{\mathbf v}=0 \quad\forall,\mathbf v\neq0\text{ with }\mathbf v\perp\mathbf w.
\end{equation}
Thus, no independent linear combination orthogonal to the target can be estimated from the same statistical model.

Let $\mathcal R=\operatorname{range}(\mathbf Q)$. Because $\E_w>0$, one has $\mathbf w\in\mathcal R$. If $\dim\mathcal R\ge2$, there exists an $\mathbf x\in\mathcal R$ linearly independent of $\mathbf w$, and hence
\begin{equation}
    \mathbf v = \mathbf x- \frac{\mathbf w^T\mathbf x}{\mathbf w^T\mathbf w}\mathbf w \in\mathcal R\cap\mathbf w^\perp, \qquad \mathbf v\neq0.
\end{equation}
This would make $\mathbf v$ estimable and contradict \cref{eq:functional_privacy_definition}. Functional privacy therefore requires
\begin{equation}\label{eq:functional_range_sm}
    \operatorname{range}(\mathbf Q)=\operatorname{span}(\mathbf w).
\end{equation}
Since $\mathbf Q$ is symmetric and positive semidefinite, this is equivalent to
\begin{equation}\label{eq:functional_qfim_sm}
\mathbf Q=\kappa\mathbf w\mathbf w^T, \qquad \kappa>0.
\end{equation}
Conversely, \cref{eq:functional_qfim_sm} clearly satisfies \cref{eq:functional_privacy_definition}; it is therefore both necessary and sufficient for functional privacy.

\subsection{Relation Between Intrinsic Local and Functional Privacy}

For a nonlocalized target, i.e., a target supported on at least two nodes, functional privacy implies intrinsic local privacy. Indeed, $\operatorname{range}(\mathbf Q)=\operatorname{span}(\mathbf w)$, while $\mathbf e_i\notin\operatorname{span}(\mathbf w)$ for every $i$. The range-space conditions in \cref{eq:privacy_range_sm} are therefore satisfied. A localized target with $\mathbf w\parallel\mathbf e_i$ cannot possess intrinsic local privacy because estimability of $\mathbf w$ would simultaneously make $\mathbf e_i$ estimable.

This implication is less immediate if privacy is described using projected QFI. For the functionally private QFIM in \cref{eq:functional_qfim_sm}, the projected QFI along a participating local coordinate is
\begin{equation}\label{eq:local_projected_private}
    \mathbf e_i^T\mathbf Q\mathbf e_i =\kappa w_i^2=q_i>0, \qquad w_i\neq0.
\end{equation}
Thus, if all other parameters were known, changing $\theta_i$ would alter the authorized function and would remain detectable. Intrinsic local privacy instead asks whether $\theta_i$ can be identified independently when the other parameters are unknown. Since $\mathbf e_i\notin\operatorname{range}(\mathbf Q)$, one obtains
\begin{equation}
    q_i>0,\qquad \E_i=0,
\end{equation}
for every participating node. This distinction demonstrates why projected single-parameter sensitivity does not imply nuisance-aware local estimability.

The converse does not hold because intrinsic local privacy permits $\mathrm{rank}(\mathbf{Q})>1$; explicit higher-rank qubit constructions are given in Sec. VII. 

Finally, with prescribed diagonal capacities, the rank-one condition \cref{eq:functional_qfim_sm} is feasible only if $q_i=\kappa w_i^2$ for all $i$. Hence, all nonzero-weight nodes must have the same weighted capacity, $\frac{q_i}{w_i^2}=\kappa$ for all $\omega_i\ne 0$, while every zero-weight node must satisfy $q_i=0$. The resulting global EQFI is
\begin{equation}\label{eq:functional_bottleneck_sm}
\E_w=\kappa=\min_{i:w_i\neq0}\frac{q_i}{w_i^2},
\end{equation}
which saturates the fundamental bottleneck bound. Functional privacy therefore realizes the rank-one endpoint of intrinsic local privacy: all estimable information is concentrated into the single authorized global direction.

\subsection{Comparison with Previous Work}

Ref.~\cite{farokhi2026precision} recently established a complementary QFI duality for distributed sensing. For locally phase-encoded $N$-qubit probes and any two orthonormal sensing directions $\hat{\mathbf w}\perp\hat{\mathbf v}$, it gives
\begin{equation}
    F_{\hat{\mathbf w}}+F_{\hat{\mathbf v}} \leq \operatorname{Tr}(\mathbf Q)\leq N,
\end{equation}
where $F_{\hat{\mathbf u}}=\hat{\mathbf u}^T\mathbf Q\hat{\mathbf u}$ is the projected QFI and $\hat{\mathbf u} = \mathbf u/ \sqrt{\mathbf u^T \mathbf u}$ for all weights $\mathbf u$ satisfies $\sum_{i}|u_i| = 1$. This bound applies to arbitrary probes within the considered local-qubit encoding model; all equatorial probes saturate it for $N=2$, while GHZ probes provide a saturating family for $N\geq2$. In particular, $F_{\hat{\mathbf w}}=N$ exhausts the available trace budget and forces $F_{\hat{\mathbf v}}=0$ for every $\hat{\mathbf v}\perp\hat{\mathbf w}$, thereby connecting Heisenberg-limited directional sensitivity with functional privacy.

The present framework addresses a distinct, nuisance-aware estimation problem. Projected QFI quantifies sensitivity when the parameter vector is varied only along a prescribed direction. By contrast, when all local parameters are independently unknown, the precision for estimating $\theta_w=\mathbf w^T\bm{\theta}$ is governed by
\begin{equation}
\E_w=
\begin{cases}
(\mathbf w^T\mathbf Q^+\mathbf w)^{-1},
& \mathbf w\in\operatorname{range}(\mathbf Q),\\
0, & \mathbf w\notin\operatorname{range}(\mathbf Q),
\end{cases}
\end{equation}
which incorporates the penalty from all nuisance directions and remains well defined for singular QFIMs. For fixed diagonal local QFIs, our node--subnetwork phase map characterizes this effective information for an arbitrary positive-semidefinite QFIM, resolving the optimal, break-even, overcorrelated, and matched-singular regimes rather than constraining only a pair of orthogonal projected sensitivities.

The connection with previous privacy results is recovered as a special case. Functional privacy, as formulated by Hassani et al.~\cite{hassani2025privacy}, requires
$\operatorname{range}(\mathbf Q)=\operatorname{span}{(\mathbf w)}$, or equivalently
$\mathbf Q=\kappa\mathbf w\mathbf w^T$. For every participating node, this rank-one case lies at the perfectly aligned matched-singular endpoint of our phase map,
\begin{equation}
r_i\longrightarrow1,\qquad
|\mu_i|\longrightarrow1,\qquad
s_i-\mu_i r_i\longrightarrow0.
\end{equation}
Our intrinsic local-privacy condition is more general: it requires only
$\mathbf e_i\notin\operatorname{range}(\mathbf Q)$ for every $i$ while
$\mathbf w\in\operatorname{range}(\mathbf Q)$, and therefore permits higher-rank QFIMs that retain additional estimable collective modes. The distinction between projected QFI and nuisance-aware EQFI is illustrated explicitly in the preceding permutation-invariant example in Sec.~V.

\section{Qubit-State Constructions for Intrinsic Local Privacy}\label{sec:qubit_privacy_construction}

We consider qubit sensors with local sensing generators $H_i=Z_i/2$. For a pure probe $|\psi\rangle$, the QFIM is $Q_{ij}=\langle Z_iZ_j\rangle-\langle Z_i\rangle\langle Z_j\rangle$. 

\subsection{Physical principle and general construction}

The physical principle underlying intrinsic local privacy is to \textit{freeze selected collective sensing modes while retaining fluctuations along the target mode}. For a parameter-space direction $\mathbf a$, define the collective generator $G_{\mathbf a}=\sum_i a_iZ_i$. Its QFI is $\mathbf a^T\mathbf Q\mathbf a=\operatorname{Var}_{\psi}(G_{\mathbf a})$. Thus, if $|\psi\rangle$ is an eigenstate of $G_{\mathbf a}$, the parameter combination $\mathbf a^T\bm{\theta}$ produces only an unobservable global phase and belongs to the QFIM kernel.

To make every individual parameter inaccessible, we freeze a subspace of collective modes that involves every node. At the same time, these frozen modes must be orthogonal to the target so that the target can remain estimable. Let
\begin{equation}
    \mathcal K\subseteq\mathbf w^\perp, \qquad \dim\mathcal K=r_K,
\end{equation}
and require
\begin{equation}
    \mathbf e_i\not\perp\mathcal K \qquad\forall i.
    \label{eq:K_full_support}
\end{equation}
The latter condition means that every node participates in at least one frozen collective mode.

Choose a basis $\{\mathbf u^{(a)}\}_{a=1}^{r_K}$ of $\mathcal K$ and define
\begin{equation}
    G_a=\sum_{i=1}^{N}u_i^{(a)}Z_i,
    \qquad a=1,\ldots,r_K.
    \label{eq:privacy_generators}
\end{equation}
Because all $G_a$ are diagonal in the computational basis, they commute
and possess the joint spectral decomposition
\begin{equation}
    \mathcal H=\bigoplus_{\mathbf g}\mathcal H_{\mathbf g},
    \qquad
    \mathcal H_{\mathbf g}=\bigcap_{a=1}^{r_K}\operatorname{Eig}(G_a,g_a),
    \label{eq:joint_spectral_decomposition}
\end{equation}
where
\begin{equation}
    \mathcal H_{\mathbf g}=\operatorname{span}
    \left\{|\mathbf z\rangle:(\mathbf u^{(a)})^T\mathbf z=g_a, \forall a\right\}.
    \label{eq:joint_eigenspace_z}
\end{equation}
Here, $\mathbf z\in\{\pm1\}^N$ denotes the vector of local $Z$
eigenvalues associated with $|\mathbf z\rangle$.

We prepare a coherent probe within one joint eigenspace,
\begin{equation}
    |\psi\rangle=\sum_{\mathbf z\in\mathcal S_\psi} c_{\mathbf z}|\mathbf z\rangle,\qquad
    \mathcal S_\psi\subseteq\left\{\mathbf z:(\mathbf u^{(a)})^T\mathbf z=g_a, \forall a\right\},
    \label{eq:general_private_state}
\end{equation}
where $c_{\mathbf z}\neq0$ for every occupied configuration. Since
\begin{equation}
    G_a|\psi\rangle=g_a|\psi\rangle,\quad  \forall a
\end{equation}
we have
\begin{equation}
    (\mathbf u^{(a)})^T\mathbf Q\mathbf u^{(a)}=\operatorname{Var}_{\psi}(G_a)=0,\quad  \Rightarrow\quad  \mathcal K\subseteq\ker(\mathbf Q).
\end{equation}
Because every $\mathbf e_i$ has a nonzero component along
$\mathcal K$, this inclusion is already sufficient to make every local
parameter inaccessible:
\begin{equation}
    \mathbf e_i\notin\operatorname{range}(\mathbf Q),
    \qquad
    \mathcal E_i=0
    \quad\forall i.
\end{equation}

It remains necessary to ensure that the global target is not frozen.
For this purpose, define the variation subspace generated by the
occupied configurations,
\begin{equation}
    \mathcal V_\psi
    \equiv
    \operatorname{span}
    \left\{
    \mathbf z-\mathbf z':
    \mathbf z,\mathbf z'\in\mathcal S_\psi
    \right\}.
    \label{eq:variation_subspace}
\end{equation}
A direction $\mathbf a$ has zero QFI if and only if
$\mathbf a^T\mathbf z$ is identical for every occupied configuration.
Consequently,
\begin{equation}
    \ker(\mathbf Q)=\mathcal V_\psi^\perp, \qquad
    \operatorname{range}(\mathbf Q)=\mathcal V_\psi.
    \label{eq:kernel_from_variation_space}
\end{equation}
The target therefore remains estimable precisely when
\begin{equation}
    \mathbf w\in\mathcal V_\psi.
    \label{eq:target_retained_condition}
\end{equation}

Combining these observations, a sufficient and operational
construction of intrinsic local privacy is
\begin{equation}
    \mathcal K\subseteq\mathbf w^\perp,\qquad
    \mathbf e_i\not\perp\mathcal K\ \forall i,\qquad
    \mathbf w\in\mathcal V_\psi.
    \label{eq:physical_privacy_construction}
\end{equation}
The first two conditions freeze collective modes that involve every
node, whereas the third ensures that the target still fluctuates across
the occupied configurations. Together they imply
\begin{equation}
    \mathcal E_i=0\quad\forall i,
    \qquad
    \mathcal E_w>0.
\end{equation}

The actual QFIM kernel may be larger than the initially chosen
$\mathcal K$. This does not spoil intrinsic local privacy provided all
additional null directions remain orthogonal to $\mathbf w$. 

The physical construction is summarized by
\begin{equation}
    \begin{aligned}
    \mathbf w
    &\longrightarrow
    \mathcal K\subseteq\mathbf w^\perp\\
    &\longrightarrow
    \{G_a\}_{a=1}^{r_K}\\
    &\longrightarrow
    \mathcal H_{\mathbf g}\\
    &\longrightarrow
    |\psi\rangle\in\mathcal H_{\mathbf g}\\
    &\longrightarrow
    \mathcal K\subseteq
    \ker(\mathbf Q)=\mathcal V_\psi^\perp,
    \qquad
    \mathbf w\in\mathcal V_\psi.
    \end{aligned}
    \label{eq:privacy_construction_chain}
\end{equation}

Not every abstract choice of $\mathcal K$ is necessarily realizable
using fixed qubit generators $Z_i$, because the corresponding joint
eigenspace may not contain sufficiently many computational-basis
configurations. The examples below avoid this issue by explicitly
constructing the occupied configurations and their associated kernel.

\subsection{Uniform-weight construction}
\label{sec:uniform_weight_privacy_construction}

We now apply the general construction chain in \cref{eq:privacy_construction_chain} to the uniform target
\begin{equation}
    \mathbf w=\frac{\mathbf1_N}{N}.
\end{equation}
To avoid confusion with the phase-map overlap parameters $r_i$, we denote the desired QFIM nullity by
\begin{equation}
    r_{\mathrm K}\equiv\dim\ker(\mathbf Q),
    \qquad
    r\equiv\operatorname{rank}(\mathbf Q)=\dim\operatorname{range}(\mathbf Q)=N-r_{\mathrm K}.
\end{equation}
Here, $r$ is the desired dimension of the estimable subspace. For
$N\geq4$, the construction below realizes every
$r_{\mathrm K}=1,\ldots,N-1$, i.e. $r=1,\ldots,N-1$, while keeping the local capacities fixed. 

\paragraph*{1. Choice of the frozen subspace.---}

Following the first step of
\cref{eq:privacy_construction_chain}, we choose an
$r_{\mathrm K}$-dimensional subspace
$\mathcal K_{r_{\mathrm K}}\subseteq\mathbf w^\perp$. Specifically, we
define
\begin{equation}
    \mathcal K_{r_{\mathrm K}}=\left\{\mathbf k\in\mathbb R^N:\mathbf1_N^T\mathbf k=0,\quad k_j+k_N=0\ \text{for }j=1,\ldots,r-1\right\}.
    \label{eq:uniform_kernel}
\end{equation}
The first constraint ensures
$\mathcal K_{r_{\mathrm K}}\subseteq\mathbf w^\perp$, so the uniform
target is not frozen. The remaining $r-1$ constraints control the number of additional nonlocal modes that remain estimable. These $r$ constraints are independent, and therefore $ \dim\mathcal K_{r_{\mathrm K}} =N-r=r_{\mathrm K}$. For $N\geq4$, this subspace has a nonzero projection onto every coordinate direction, $\mathbf e_i\not\perp\mathcal K_{r_{\mathrm K}}$ for every $i$. 
Thus, freezing all modes in $\mathcal K_{r_{\mathrm K}}$ removes the estimability of every individual local parameter.

\paragraph*{2. Collective generators.---}

For each $\mathbf k\in\mathcal K_{r_{\mathrm K}}$, define the
collective generator
\begin{equation}
    G_{\mathbf k}=\sum_i k_iZ_i.
\end{equation}
Freezing the modes in $\mathcal K_{r_{\mathrm K}}$ only requires the probe to belong to a common eigenspace of these generators; the corresponding eigenvalues need not generally vanish. For the present uniform-weight construction, however, we choose their common zero-eigenvalue ($g_a=0$) subspace,
\begin{equation}
    \mathcal H_{\mathbf0}=\bigcap_{\mathbf k\in\mathcal K_{r_{\mathrm K}}}\ker G_{\mathbf k}.
    \label{eq:uniform_common_eigenspace}
\end{equation}
This choice allows each $Z$-configuration $|\mathbf v\rangle_Z$ to be paired with its complement $|-\mathbf v\rangle_Z$ within the same common eigenspace, because $G_{\mathbf k}|\pm\mathbf v\rangle_Z=\pm(\mathbf k^T\mathbf v)|\pm\mathbf v\rangle_Z=0$. These complementary pairs will ensure vanishing local means and identical local sensing capacities.

\paragraph*{3. Selection of configurations in the common eigenspace.---}

To construct a state in $\mathcal H_{\mathbf0}$, we need computational-basis configurations whose $Z$-eigenvalue vectors are orthogonal to every $\mathbf k\in\mathcal K_{r_{\mathrm K}}$. Equivalently, these configuration vectors must belong to $\mathcal K_{r_{\mathrm K}}^\perp$.

The first configuration is chosen as $\mathbf v^{(0)}=\mathbf1_N$. This choice is essential because $\mathbf w=\mathbf v^{(0)}/N$: including $\mathbf v^{(0)}$ ensures that the uniform target belongs to the fluctuating subspace and remains estimable.

We then choose
\begin{equation}
    \mathbf v^{(j)}=\mathbf1_N-2(\mathbf e_j+\mathbf e_N), \qquad j=1,\ldots,r-1.
    \label{eq:uniform_configuration_vectors}
\end{equation}
Each $\mathbf v^{(j)}$ is a physically allowed $Z$-eigenvalue
configuration obtained from $\mathbf1_N$ by reversing the outcomes at
nodes $j$ and $N$.

The simultaneous two-node reversal is important. If only node $j$
were reversed, the new configuration would be
$\mathbf1_N-2\mathbf e_j$, and $\mathbf1_N-\left(\mathbf1_N-2\mathbf e_j\right)=2\mathbf e_j$. 
The resulting variation subspace would contain the individual
direction $\mathbf e_j$, making $\theta_j$ estimable and destroying
local privacy. By reversing nodes $j$ and $N$ together, the new
direction becomes $\mathbf v^{(0)}-\mathbf v^{(j)}= 2(\mathbf e_j+\mathbf e_N)$, which is a collective two-node direction rather than a local one.

The selected vectors span
\begin{equation}
\begin{aligned}
    \operatorname{span} \{\mathbf v^{(0)},\ldots,\mathbf v^{(r-1)}\}
    &= \operatorname{span} \left\{ \mathbf1_N,\, \mathbf e_1+\mathbf e_N,\ldots, \mathbf e_{r-1}+\mathbf e_N \right\}\\
    &= \mathcal K_{r_{\mathrm K}}^\perp.
\end{aligned}
\label{eq:uniform_visible_subspace}
\end{equation}
Thus, these $r$ vectors form a basis of the desired estimable
subspace. In particular, they are linearly independent.

Moreover, for every $\mathbf k\in\mathcal K_{r_{\mathrm K}}$, $\mathbf k^T\mathbf v^{(j)}=0$ for all $j$. 
Therefore,
\begin{equation}
\begin{aligned}
    G_{\mathbf k}|\mathbf v^{(j)}\rangle_Z=0,\quad 
    G_{\mathbf k}|-\mathbf v^{(j)}\rangle_Z=0.
\end{aligned}
\label{eq:pair_same_common_eigenspace}
\end{equation}
Hence $|\mathbf v^{(j)}\rangle_Z$ and
$|-\mathbf v^{(j)}\rangle_Z$ belong to the same zero-eigenvalue
subspace of every frozen generator. This is the direct realization of
the common-eigenspace step in
\cref{eq:privacy_construction_chain}.

\paragraph*{4. Construction of the probe state.---}

For each observable direction $\mathbf v^{(j)}$, we coherently
superpose the two complementary configurations
$|\mathbf v^{(j)}\rangle_Z$ and
$|-\mathbf v^{(j)}\rangle_Z$. The complete probe is
\begin{equation}
    |\psi_{r_{\mathrm K}}^{(N)}\rangle = \frac{1}{\sqrt{2r}} \sum_{j=0}^{r-1} \left( |\mathbf v^{(j)}\rangle_Z + |-\mathbf v^{(j)}\rangle_Z \right), \qquad r=N-r_{\mathrm K}.
    \label{eq:uniform_private_family}
\end{equation}
Equation~\eqref{eq:pair_same_common_eigenspace} guarantees
\begin{equation}
    |\psi_{r_{\mathrm K}}^{(N)}\rangle \in\mathcal H_{\mathbf0}.
\end{equation}

The reason for including both $\mathbf v^{(j)}$ and $-\mathbf v^{(j)}$ is twofold. First, they belong to the same frozen joint eigenspace. Second, their difference is $\mathbf v^{(j)}-(-\mathbf v^{(j)}) =2\mathbf v^{(j)}$, so the complementary pair generates precisely the fluctuating direction $\mathbf v^{(j)}$. Each pair therefore opens one estimable collective mode without affecting the frozen subspace.

Equal weighting provides a canonical symmetric construction. Because each configuration appears together with its opposite, $\langle Z_i\rangle=0$. It also gives every selected collective mode equal metrological weight and keeps all diagonal capacities identical.

\paragraph*{5. QFIM kernel and intrinsic local privacy.---}

Since the local means vanish, the QFIM is
\begin{equation}
    \mathbf Q_{r_{\mathrm K}}^{(N)}=\frac{1}{r} \sum_{j=0}^{r-1} \mathbf v^{(j)}\mathbf v^{(j)T}. 
    \label{eq:uniform_private_QFIM}
\end{equation}
Its range is exactly the subspace generated by the selected
configurations:
\begin{equation}
    \operatorname{range}
    \bigl(\mathbf Q_{r_{\mathrm K}}^{(N)}\bigr) =
    \operatorname{span} \{\mathbf v^{(0)},\ldots,\mathbf v^{(r-1)}\}= \mathcal K_{r_{\mathrm K}}^\perp.
\end{equation}
Consequently,
\begin{equation}
    \ker \bigl(\mathbf Q_{r_{\mathrm K}}^{(N)}\bigr)= \mathcal K_{r_{\mathrm K}},
\end{equation}
and
\begin{equation}
    \operatorname{rank}\bigl(\mathbf Q_{r_{\mathrm K}}^{(N)}\bigr)=r,
    \qquad \dim\ker\bigl(\mathbf Q_{r_{\mathrm K}}^{(N)}\bigr)=r_{\mathrm K}.
\end{equation}

Because
\begin{equation}
    \mathbf w=\frac{\mathbf v^{(0)}}{N}\in\operatorname{range} \bigl(\mathbf Q_{r_{\mathrm K}}^{(N)}\bigr),
\end{equation}
the uniform target remains estimable. For $N\geq4$, no individual
coordinate vector belongs to the selected range:
\begin{equation}
    \mathbf e_i\notin\operatorname{range}\bigl(\mathbf Q_{r_{\mathrm K}}^{(N)}\bigr)\qquad\forall i.
\end{equation}
Hence
\begin{equation}
    \mathcal E_i=0\quad\forall i,\qquad \mathcal E_w>0.
\end{equation}

All nodes have the same local capacity,
\begin{equation}
    q_i=Q_{ii}=1,
\end{equation}
and the global EQFI is
\begin{equation}
    \mathcal E_w=\frac{N^2}{r}=\frac{N^2}{N-r_{\mathrm K}}.
    \label{eq:uniform_private_global_EQFI}
\end{equation}

At $r_{\mathrm K}=N-1$, only $\mathbf v^{(0)}=\mathbf1_N$ is retained and the probe reduces to the GHZ state. At $r_{\mathrm K}=1$, the single frozen mode is $\mathbf u=(-1,\ldots,-1,N-3,1)^T$, where the first $N-2$ entries are $-1$. Every component is nonzero for $N\geq4$, so this endpoint realizes the matched-singular boundary of the node--subnetwork phase map.

\subsection{Explicit four-node example}
\label{sec:N4_uniform_privacy_example}

The physical progression is particularly transparent for $N=4$ and
\begin{equation}
    \mathbf w=\frac14(1,1,1,1)^T.
\end{equation}
We begin with the GHZ pair, for which only the uniform mode fluctuates.
We then add complementary configuration pairs one at a time. Each new
pair opens one additional nonlocal sensing direction while preserving
the inaccessibility of every individual parameter.

The relevant configuration vectors are
\begin{equation}
\begin{aligned}
    \mathbf v^{(0)}&=(1,1,1,1)^T,\\
    \mathbf v^{(1)}&=(-1,1,1,-1)^T,\\
    \mathbf v^{(2)}&=(1,-1,1,-1)^T.
\end{aligned}
\label{eq:N4_uniform_vectors}
\end{equation}
Together with their opposite vectors, they correspond to
\begin{equation}
\begin{aligned}
    \pm\mathbf v^{(0)}
    &\longleftrightarrow
    |0000\rangle,\ |1111\rangle,\\
    \pm\mathbf v^{(1)}
    &\longleftrightarrow
    |1001\rangle,\ |0110\rangle,\\
    \pm\mathbf v^{(2)}
    &\longleftrightarrow
    |0101\rangle,\ |1010\rangle.
\end{aligned}
\end{equation}

\subsubsection{$r=1$: only the target mode fluctuates}

For $r=1$ and $r_{\mathrm K}=3$, the probe is
\begin{equation}
    |\psi_3\rangle=\frac{|0000\rangle+|1111\rangle}{\sqrt2}.
\end{equation}
Its QFIM is
\begin{equation}
    \mathbf Q_3=
    \begin{pmatrix}
        1&1&1&1\\
        1&1&1&1\\
        1&1&1&1\\
        1&1&1&1
    \end{pmatrix},
\end{equation}
with
\begin{equation}
    \operatorname{rank}(\mathbf Q_3)=1, \qquad \ker(\mathbf Q_3)=\mathbf1_4^\perp.
\end{equation}
Only the uniform target mode fluctuates, yielding
\begin{equation}
    \mathcal E_w=16.
\end{equation}
This saturates the bottleneck value
$q_i/w_i^2=1/(1/4)^2=16$.

\subsubsection{$r=2$: one additional nonlocal mode}

Adding the complementary pair
$|1001\rangle,|0110\rangle$ gives
\begin{equation}
|\psi_2\rangle=\frac{|0000\rangle+|1111\rangle+|1001\rangle+|0110\rangle}{2}.
\end{equation}
Its QFIM is
\begin{equation}
    \mathbf Q_2=
    \begin{pmatrix}
        1&0&0&1\\
        0&1&1&0\\
        0&1&1&0\\
        1&0&0&1
    \end{pmatrix},
\end{equation}
with
\begin{equation}
    \ker(\mathbf Q_2)=\operatorname{span}\left\{(1,0,0,-1)^T,\,(0,1,-1,0)^T\right\}.
\end{equation}
The target remains estimable, while every coordinate direction still
has a nonzero component along the frozen subspace. Hence
\begin{equation}
    \mathcal E_i=0\quad\forall i,\qquad \mathcal E_w=8.
\end{equation}

\subsubsection{$r=3$: minimally singular local privacy}

Adding the second complementary pair
$|0101\rangle,|1010\rangle$ gives
\begin{equation}
    |\psi_1\rangle=\frac{
    |0000\rangle+|1111\rangle+
    |1001\rangle+|0110\rangle+
    |0101\rangle+|1010\rangle
    }{\sqrt6}.
\end{equation}
Its QFIM is
\begin{equation}
    \mathbf Q_1=\frac{1}{3}
    \begin{pmatrix}
        3&-1&1&1\\
        -1&3&1&1\\
        1&1&3&-1\\
        1&1&-1&3
    \end{pmatrix},
\end{equation}
with
\begin{equation}
    \ker(\mathbf Q_1)=\operatorname{span}
    \left\{(-1,-1,1,1)^T\right\}.
\end{equation}
The remaining null mode involves every node and is orthogonal to the
uniform target. Therefore,
\begin{equation}
    \mathcal E_i=0\quad\forall i,
    \qquad
    \mathcal E_w=\frac{16}{3}.
\end{equation}
This is the minimally singular locally private state and is the member
directly captured by the matched-singular boundary of the
node--subnetwork phase map.

\subsubsection{Summary}

For all three probes,
\begin{equation}
    q_i=1,\qquad
    \mathcal E_i=0\quad\forall i.
\end{equation}
Their properties are summarized by
\begin{equation}
\begin{array}{c|c|c|c}
r_{\mathrm K}
&\operatorname{rank}(\mathbf Q) &\mathcal E_w &\text{occupied configurations}\\
\hline
3&1&16&2\\
2&2&8&4\\
1&3&16/3&6
\end{array}.
\end{equation}
This sequence directly illustrates the physical principle: the GHZ state freezes every mode except the target and therefore maximizes the global EQFI. Adding complementary configuration pairs restores additional nonlocal sensing modes, reducing the kernel dimension and redistributing information away from the target. Nevertheless, no individual coordinate direction becomes estimable, so intrinsic local privacy is preserved throughout.

\bibliography{references}